\DeclareFontFamily{U}{wncy}{}
\DeclareFontShape{U}{wncy}{m}{n}{<->wncyr10}{}
\DeclareSymbolFont{mcy}{U}{wncy}{m}{n}
\DeclareMathSymbol{\Sh}{\mathord}{mcy}{"58} 
\DeclareMathSymbol{\sha}{\mathord}{mcy}{"78}
\newcommand{\sa}{\mathcal{A}}
\newcommand{\momalg}[1]{{M}}
\newcommand{\convalg}[1]{{\mathcal{F}}}
\newcommand{\ann}{\operatorname{Ann}}
\newcommand{\tr}{\operatorname{tr}}
\newcommand{\skewalg}[1]{{S}}
\newcommand{\squext}[1]{{E^{\otimes 2}_n}}
\newcommand{\act}{\circ}
\DeclareMathOperator{\poly}{poly}
\DeclareMathOperator{\diff}{Diff}
\DeclareMathOperator{\diag}{diag}
\DeclareMathOperator{\lin}{span}
\DeclareMathOperator{\sgn}{sgn}
\newcommand\restr[2]{{
  \left.\kern-\nulldelimiterspace 
  #1 
  \vphantom{\big|} 
  \right|_{#2} 
  }}
\title{An Algorithmic Method of Partial Derivatives}
\author{Cornelius Brand}
\address{Charles University}
\email{cbrand@iuuk.mff.cuni.cz}
\author{Kevin Pratt}
\address{Carnegie Mellon University}
\email{kpratt@andrew.cmu.edu}
\declaretheorem[name=Theorem]{theorem}
\declaretheorem[name=Corollary]{cor}
\theoremstyle{plain}
\newtheorem{prop}{Proposition}
\newtheorem{lem}{Lemma}
\theoremstyle{definition}
\newtheorem{defn}{Definition}
\newtheorem{example}{Example}
\newtheorem{remark}{Remark}
\newtheorem{question}{Question}
\newtheorem{problem}{Problem}
\begin{document}

\begin{abstract}
We study the following problem and its applications: given a homogeneous degree-$d$ polynomial $g$ as an arithmetic circuit, and a $d \times d$ matrix $X$ whose entries are homogeneous linear polynomials, compute $g(\partial/\partial x_1, \ldots, \partial/\partial x_n) \det X$. By considering special cases of this problem we obtain faster parameterized algorithms for several problems, including the matroid $k$-parity and $k$-matroid intersection problems, faster \emph{deterministic} algorithms for testing if a linear space of matrices contains an invertible matrix (Edmonds's problem) and detecting $k$-internal outbranchings, and more. We also match the runtime of the fastest known deterministic algorithm for detecting subgraphs of bounded pathwidth, while using a new approach.

Our approach raises questions in algebraic complexity related to Waring rank and the exponent of matrix multiplication $\omega$. In particular, we study a new complexity measure on the space of homogeneous polynomials, namely the bilinear complexity of a polynomial's apolar algebra. Our algorithmic improvements are reflective of the fact that for the degree-$n$ determinant polynomial this quantity is at most $O(n 2^{\omega n})$, whereas all known upper bounds on the Waring rank of this polynomial exceed $n!$.
\end{abstract}
\maketitle
\newpage
\setcounter{page}{1}

\section{Introduction} \label{sec:intro}

Let $\mathcal{S}_d^n \coloneqq \mathbb{R}[x_1, \ldots, x_n]_d$ denote the vector space of homogeneous polynomials of degree $d$ in $n$ variables with real coefficients. We define the \emph{apolar inner product} $\langle \cdot , \cdot \rangle : \mathcal{S}_d^n \times \mathcal{S}_d^n \to \mathbb{R}$ via
\begin{equation}\label{apolar}
\langle f, g \rangle \coloneqq f(\frac{\partial}{\partial x_1}, \ldots, \frac{\partial}{\partial x_n}) g.
\end{equation}
This inner product (alternativley known as the \emph{Sylvester product}, the \emph{Bombieri inner product}, or the \emph{Fischer-Fock inner product}) originated in 19th century invariant theory \cite{sylvester1970principles} and has become a source of interest in computer science due to algorithmic applications. In a typical application, one first identifies some easy-to-evaluate generating polynomial $g$ whose coefficients encode solutions to a combinatorial problem. This information can then be recovered by computing $\langle f, g \rangle$ for a suitable choice of $f$. While this quantity is often $\#P$ hard to compute exactly (this follows from the coming example), in special cases it can be efficiently approximated. This approach has led to new algorithms for problems as disparate as approximating permanents and mixed discriminants \cite{gurvits2005complexity}, sampling from determinantal point processes \cite{anari2016monte}, Nash social welfare maximization \cite{anari2016nash}, and approximately counting subgraphs of bounded treewidth \cite{waring}.

For example, given a matrix $A \in \mathbb{R}^{n \times n}$, define $P_A \coloneqq \prod_{i=1}^n \sum_{j=1}^n A_{i,j} x_j$. Then 
\[\langle x_1x_2\cdots x_n, P_A \rangle = \sum_{\sigma \in \mathfrak{S}_n} \prod_{i=1}^n A_{i,\sigma(i)}\]
is the permanent of $A$.

As a second example, given a directed graph $G$ with $n$ vertices, let $A_G$ be the matrix with entry $(i,j)$ equal to the variable $x_i$ if there is an edge from vertex $v_i$ to vertex $v_j$, and zero otherwise. By the trace method,
\begin{equation*}\label{gencycles}
\mathrm{tr}(A_G^d) = \sum_{\substack{(v_{i_1},v_{i_2},\ldots,v_{i_d}) \in G, \\ v_{i_d} = v_{i_1}}} x_{i_1} \cdots x_{i_d} \in \mathcal{S}_d^n.
\end{equation*}
Now let $A \in \mathbb{R}^{d \times n}$ be a matrix any $d$ columns of which are linearly independent. Let $X = A \cdot \text{diag}(x_1, \ldots, x_n)\cdot A^\mathrm{T}$. By the Cauchy-Binet Theorem,
\[\det X =  \sum_{\substack{S \in \binom{[n]}{d}}} \det(A_S)^2 \prod_{i \in S}x_i.\]
(Here $A_S$ refers to the $d \times d$ submatrix of $A$ with columns indexed by the set $S$.) Since any $d$ columns in $A$ are linearly independent, $\det(A_S)^2 > 0$ for all $S \in \binom{[n]}{d}$. Then note that the result of differentiating $\mathrm{tr}(A_G^d)$ by $\det(A_S)^2 \prod_{i \in S}x_i$ is positive if there is a simple cycle on the vertices $\{v_i: i \in S\}$, and zero otherwise. It follows that $\langle \det X, \mathrm{tr}(A_G^d) \rangle > 0$ if and only if $G$ contains a simple cycle of length $d$.

Motivated by such examples, we consider the algorithmic task of computing \eqref{apolar} when $f$ is the determinant of a symbolic matrix (a matrix whose entries are homogeneous linear polynomials) and $g$ is given as an arithmetic circuit. This has applications to parameterized algorithms, yielding faster algorithms for the matroid $k$-packing and $k$-parity problems, the first deterministic $\poly(n)$-time algorithm for testing if a subspace of matrices of dimension $O(\log n)$ contains an invertible matrix, faster deterministic algorithms for detecting $k$-internal outbranchings, among others. Starting from the observation of the above example, we also give a deterministic $\varphi^{2d}\poly(n)<2.62^d\poly(n)$-time algorithm for detecting simple cycles of length $d$ in an $n$ vertex graph. Here $\varphi \coloneqq \frac{1+\sqrt{5}}{2}$ is the golden ratio. This brushes up against the fastest known algorithm for this problem which has runtime $2.55^d \poly(n)$ \cite{tsur}. Our algorithm also generalizes to detecting subgraphs of bounded pathwidth, unexpectedly matching the runtime of the fastest known algorithm for this problem \cite{fomin}, while using a new, very mechanical, approach.

Our algorithms for computing special cases of \eqref{apolar} turn out to be equivalent to algorithms for performing arithmetic in a certain algebra $\sa_f$ associated to $f$, namely the \emph{apolar algebra} of $f$. Apolar algebras (also known as Artinian Gorenstein algebras) have been studied extensively since the work of F.S.~Macaulay in 1916 \cite{macaulay1994algebraic} and are ubiquitous in algebraic combinatorics; see e.g.~\cite{adiprasito2018hodge}. As a first step towards extending our approach, we then study $\mathbf{R}(\sa_f)$, the bilinear complexity of the apolar algebra of a polynomial $f$. This gives upper bounds on the number of non-scalar multiplications needed to compute \eqref{apolar} in the white-box setting (Proposition \ref{multbd}). We will show in Example \ref{fsc} that in fact previous methods in exact algorithms (specifically, those for subset convolution) necessarily made use of upper bounds on this quantity.

To obtain further algorithmic improvements, we pose the following algebraic question:

\begin{question}
Let $\mathcal{T}_{d,n}$ be the set of all $f \in \mathcal{S}_d^n$ such that $f = \sum_{S \in \binom{[n]}{d}} c_S \prod_{i \in S} x_i$, where $c_S > 0$ for all $S$.  What is $B(d,n) \coloneqq \min (\dim \diff(f) :f \in \mathcal{T}_{d,n})$?
Here $\diff(f)$ denotes the vector space spanned by the partial derivatives of all orders of $f$.
\end{question}

This question was asked in \cite[Question 73]{waring}, but it was not known that an answer would have algorithmic implications.  Our algorithms make use of the upper bound $B(d,n) < \varphi^{2d}$, obtained by taking $f$ to be the determinant of a symbolic Hankel matrix. We remark that it is not hard to show that $B(d,n) \ge 2^d$. A proof of this fact is sketched as follows: first, observe that $\dim \diff (f)$ does not increase under zeroing out variables. Hence for any $f \in \mathcal{T}_{d,n}$, $\dim \diff(f) \ge \dim \diff(c \cdot x_1x_2 \cdots x_d)$ for some nonzero constant $c$. As $\diff(c \cdot x_1x_2 \cdots x_d)$ is spanned by the collection of products of subsets of the variables $x_1, \ldots, x_d$, the claim follows.
\subsection{Previous approaches to computing the inner product \eqref{apolar}}
One special case of \eqref{apolar} that has been the source of several recent breakthroughs is when $f$ and $g$ are \emph{real stable} polynomials with nonnegative coefficients; see e.g.~\cite{gurvits2008van,anari2018log}. In this case $\langle f, g \rangle$ can be approximated (up to a factor of $e^{d+\varepsilon}$) in polynomial time by a reformulation as a convex program \cite[Theorem 1.2]{anari2017generalization}. For the cases we consider, however, $f$ and $g$ will not be real stable.

Another approach is based on \emph{Waring rank} upper bounds \cite{barvinok1996two,gurvits2006hyperbolic,glynn2013permanent,waring}. The Waring rank of $f \in \mathcal{S}_d^n$, denoted $\mathbf{R}_S(f)$, is defined as the minimum $r$ such that $f = \sum_{i=1}^r  c_i\ell_i^d$ for linear forms $\ell_1, \ldots, \ell_r \in \mathcal{S}_1^n$ and scalars $c_1, \ldots, c_r$. For example, the identity
\[x_1x_2x_3 = \frac{1}{24}\left [(x_1+x_2+x_3)^3-(x_1+x_2-x_3)^3-(x_1-x_2+x_3)^3-(-x_1+x_2+x_3)^3\right ]\]
shows that $\mathbf{R}_S(x_1x_2x_3) \le 4$. Waring rank has been studied in invariant theory and algebraic geometry since the 1850's \cite[Introduction]{ik} and has gained recent attention for its applications to algebraic complexity \cite{burgisser2019no, chiantini2018polynomials}. Its relevance to \eqref{apolar} is due to the following fact, which can be verified by a direct calculation: if $f = \sum_{i=1}^r c_i(a_{i,1}x_1 + \cdots + a_{i,n}x_i)^d$, then for all $g \in \mathcal{S}_d^n$,
\[\langle f, g \rangle = d! \sum_{i=1}^r c_ig(a_{i,1}, \ldots, a_{i,n}).\] Hence upper bounds on $\mathbf{R}_S(f)$ yield algorithms for computing $\langle f, g \rangle$. Furthermore, it was shown in \cite[Theorem 6]{waring} that with only evaluation access to $g$, $\mathbf{R}_S(f)$ queries are \emph{required} to compute this inner product. Unfortunately, $\mathbf{R}_S(f)$ is usually prohibitively large; for instance, the Waring rank of almost all $f \in \mathcal{S}_d^n$ is at least$\lceil \binom{n+d-1}{d}/n \rceil$ \cite[Section 3.2]{landsberg2012tensors}.

In \cite{waring} this difficulty was overcome by studying relaxations of Waring rank. For instance, while the elementary symmetric polynomial $e_{n,d}$ is known to have Waring rank roughly $n^{d/2}$ \cite{lee2016power}, for all $\varepsilon > 0$ there exists a polynomial $f_\varepsilon \in \mathcal{S}_d^n$ with Waring rank only $O(\frac{4.075^d \log n}{\varepsilon^2})$ that $\varepsilon$-\emph{approximates} $e_{n,d}$, in the sense that for all $g \in \mathcal{S}_d^n$,
\[(1-\varepsilon)\langle e_{n,d}, g \rangle \le  \langle  f_\varepsilon, g \rangle  \le (1+\varepsilon)\langle e_{n,d}, g \rangle.\]
This fact lends itself to parameterized algorithms for problems such as approximately counting simple cycles. We remark that this discrepancy between $\mathbf{R}_S(e_{n,d})$ and the Waring rank of polynomials ``close'' to $e_{n,d}$ can be understood from a parameterized algorithms perspective as reflecting the fact that exactly counting cycles of a given length is $\#W[1]$ hard \cite{flum2004parameterized}, whereas the problem of approximately counting cycles has been known to admit parameterized algorithms since \cite{arvind2002approximation}.

\subsection{Our approach}
In contrast to previous approaches, we consider the white-box setting where $g$ is given as an arithmetic circuit $C$. For us, $f$ will always be the determinant of a symbolic matrix $X$ that is given as input. Our algorithms work by inductively evaluating $C$, computing at each gate the result of differentiating $\det X$ by the polynomial computed by $C$ at that gate\footnote{By ``differentiating $f$ by $g$'' we mean applying the differential operator $g(\partial/\partial x_1, \ldots, \partial/\partial x_n)$ to $f$.}. At the end of the algorithm, the output gate of $C$ will therefore contain $\langle g,f \rangle = \langle f, g \rangle$. Here we make use of the fact that $\langle \cdot , \cdot \rangle$ is symmetric, so one can either think about differentiating $f$ by $g$ or vice versa.

The key to our approach is that for a symbolic $d \times d$ matrix $X$, the vector space of partial derivatives of $\det X$ has dimension at most $4^d$, and in some important cases this bound can be improved to $\varphi^{2d}$. So while one might na\"{i}vely represent an element in this space as a linear combination of $\binom{n+d}{d}$ monomials, doing so generally includes a significant amount of unnecessary information. Instead, we represent elements in this space as linear combinations of minors (determinants of submatrices) of $X$, which are specified by pairs of increasing sequences.

We will start by giving in our Theorem \ref{gendiff} an algorithm for the special but important case when $g$ is computed by a \emph{skew} circuit, meaning one of the two operands to each multiplication gate is a variable or a scalar:

\begin{restatable*}{theorem}{gendiff}
\label{gendiff}
Let $C$ be a skew arithmetic circuit computing $g \in \mathcal{S}_d^n$, and let $X = (\ell_{i,j})_{i,j \in [d]}$ be a symbolic matrix with entries in $\mathcal{S}_1^n$. Then we can compute $\langle \det X, g \rangle$ with $4^d |C| \poly(d)$ arithmetic operations.
\end{restatable*}
Our algorithm for Theorem \ref{gendiff} only uses linear algebra and basic properties of differentials.

Of particular interest will be the case of Theorem \ref{gendiff} when $X$ is a Hankel matrix, meaning that $X_{i,j} = X_{i+k,j-k}$ for all $k = 0, \ldots, j-i$. For example, the generic $3 \times 3$ Hankel matrix is
\[\begin{bmatrix}
x_1 & x_2 & x_3\\ 
x_2 & x_3 & x_4\\ 
x_3 & x_4 & x_5
\end{bmatrix}.\]
We show the following improvement in this special case:
\begin{restatable*}{theorem}{hankeldiff}
\label{hankeldiff}
Let $C$ be a skew arithmetic circuit computing $g \in \mathcal{S}_d^n$, and let $X = (\ell_{i,j})_{i,j \in [d]}$ be a symbolic Hankel matrix with entries in $\mathcal{S}_1^n$. Then we can compute $\langle \det X, g \rangle$ with $\varphi^{2d}\poly(d)|C|$ arithmetic operations. Here $\varphi \coloneqq \frac{1+\sqrt{5}}{2}$ is the golden ratio.
\end{restatable*}
The improvement in Theorem \ref{hankeldiff} over Theorem \ref{gendiff} is facilitated by the fact that the space of partial derivatives of the determinant has dimension about $4^d$, whereas the space of partial derivatives of the determinant of a Hankel matrix has dimension less than $\varphi^{2d}$. We also make use of use of linear relations in the space of minors of a Hankel matrix originally studied in commutative algebra \cite{conca}.

\subsection{Applications to parameterized algorithms}
Theorem \ref{gendiff} yields faster algorithms for the $k$-matroid intersection and matroid $k$-parity problems. These are the following problems:

\begin{problem}[Matroid $k$-Parity]\label{p1}
Suppose we are given a matrix $B \in \mathbb{Q}^{km \times kn}$ representing a matroid $M$ with groundset $[kn]$, and a partition $\pi$ of $[kn]$ into parts of size $k$. Decide if the union of any $m$ parts in $\pi$ are independent in $M$.
\end{problem}

\begin{problem}[$k$-Matroid Intersection]\label{p2}
Suppose we are given matrices $B_1, \ldots, B_k \in \mathbb{Q}^{m \times n}$ representing matroids $M_1, \ldots ,M_k$ with the common groundset $[n]$. Decide if $M_1, \ldots, M_k$ share a common base. 
\end{problem}

We show in Theorems \ref{matroid1} and \ref{kmi} that these can be solved in time $4^{km}\poly(N)$, where $N$ denotes the size of the input. When $k=2$ these are the classic matroid parity and intersection problems and can be solved in polynomial time, but for $k > 2$ they are NP-hard. The first algorithms for general $k$ faster than na\"{i}ve enumeration were given by Barvinok in \cite{barvinok1995new}, and had runtimes $(km)^{2k+1}4^{k m} \poly(N)$ and $(km)^{2k}4^{k^2m} \poly(N)$, respectively. A parameterized algorithm for Problem \ref{p1} was also given by Marx in \cite{marx} where it was used to give fixed-parameter tractable algorithms for several other problems, including Problem \ref{p2}. The fastest algorithms prior to our work were due to Fomin et al.~\cite{fomin} and had runtime $2^{km \omega} \poly(N)$, where $\omega < 2.373$ is the exponent of matrix multiplication \cite{le2014powers}.

By combining Theorem \ref{gendiff} with a known construction of the determinant as a skew circuit \cite{mahajan1997combinatorial}, we obtain a faster deterministic algorithm for the following problem:
\begin{problem}[SING]
Given matrices $A_1, \ldots, A_n \in \mathbb{Q}^{d \times d}$, decide if their span contains an invertible matrix. Equivalently, decide if $\det \sum_{i=1}^n x_i A_i \not \equiv 0$.
\end{problem}
We show that SING can be solved in $4^d \poly(N)$ time in our Corollary \ref{pit}. In particular, this establishes that $\text{SING} \in \mathcal{P}$ for subspaces of matrices of logarithmic dimension. The fastest previous algorithm, given by Gurvits in \cite{gurvits}, had runtime $2^d d! \poly(N)$ and made use of an upper bound of $2^d d!$ on $\mathbf{R}_S(\det_d)$. This problem was originally studied by Edmonds for its application to matching problems \cite{edmonds1967systems}. While it is known to admit a simple randomized polynomial time algorithm as was first observed by Lov\'asz \cite{Lovasz1979determinants}, a \emph{deterministic} polynomial time algorithm would imply circuit lower bounds that seem far beyond current reach \cite{kabanets2004derandomizing}. As a result, variants of SING have attracted attention, leading to a recent breakthrough in the non-commutative setting \cite{garg2019operator}.

Theorem \ref{hankeldiff} yields the following applications:
\begin{restatable*}{cor}{apps}
\label{apps}
The following problems admit deterministic algorithms running in time $\varphi^{2d}\poly(n)$:
\begin{enumerate}
\item Deciding whether a given directed $n$-vertex graph has a directed spanning tree with at least $d$ non-leaf vertices,
\item Deciding whether a given edge-colored, directed $n$-vertex graph has a directed spanning tree containing at least $d$ colors,
\item Deciding whether a given planar, edge-colored, directed $n$-vertex graph has a perfect matching containing at least $d$ colors.
\end{enumerate}
\end{restatable*}

The previous fastest algorithms for these problems had runtimes $3.19^d\poly(n),$ $4^d\poly(n)$, and $4^d\poly(n)$, respectively \cite{esa}. This built upon work of Gutin et al.~\cite{gutin} Problem (1) is the best studied among these, with  \cite[Table 1]{gutin} listing eleven articles on this problem in the last fourteen years. It is noteworthy that our improvements do not rely on any problem-specific adaptations.

Theorem \ref{hankeldiff} also yields a $\varphi^{2d}\poly(n)$-time deterministic algorithm for detecting simple cycles of length $d$ in an $n$ vertex directed graph (and paths, and more generally subgraphs of bounded treewidth). While it is known that simple cycles of length $d$ can be detected in randomized time $2^d \poly(n)$ \cite{williams09} ($1.66^d \poly(n)$ for undirected graphs \cite{bjoerklund}), it is a major open problem to achieve the same runtime deterministically. Our algorithm brushes up against the fastest known deterministic algorithm for this problem which has runtime $2.55^d \poly(n)$ \cite{tsur}, and unexpectedly matches the runtime of a previous algorithm \cite{fomin} while using a different (shorter) approach. Our approach differs from those of previous algorithms which have been based on paradigms such as \emph{color coding}, \emph{divide and color}, and \emph{representative families} \cite[Chapter 5]{fominbook}. Whereas these methods make use of explicit constructions of pseudorandom objects such as perfect hash families, universal sets, and representative sets, our algorithm makes use of algebraic-combinatorial identities. This approach was foreshadowed in \cite[Theorem 2]{extensor}. It is important to note that our algorithm only works for unweighted graphs (or weighted graphs with integer weights bounded by $\poly(n)$), while several previous algorithms work for weighted graphs. The algorithm of \cite{fomin} also extends more generally to detect subgraphs of bounded treewidth. 

\subsection{Algebraic considerations}
As we note in Remark \ref{rmkq}, our Theorems \ref{gendiff} and \ref{hankeldiff} can be viewed as algorithms for multiplication by degree-1 elements in the apolar algebras of the determinant and the generic Hankel determinant, respectively. Algorithms for \emph{general} multiplication in these algebras, however, would have applications to problems such as detecting subgraphs of bounded \emph{treewidth} (rather than just pathwidth). As a first step towards this, we consider the quantity $\mathbf{R}(\sa_f)$, the bilinear complexity of $\sa_f$. This is at most twice the number of non-scalar multiplications needed to multiply two elements in $\sa_f$, and we show in Proposition \ref{multbd} how it bounds the number of multiplications that can be used to compute \eqref{apolar}. We note in our Theorem \ref{waringrelax} that $\mathbf{R}(\sa_f)$ is, up to a linear factor, a lower bound on the Waring rank of $f$. 

We show that for the degree-$n$ determinant polynomial $\det_n$, $\mathbf{R}(\sa_{\det_n}) \le O(n 2^{\omega n})$ where $\omega < 2.373$ is the exponent of matrix multiplication \cite{le2014powers}. Our upper bound on $\mathbf{R}(\sa_{\det_n})$ follows by realizing the apolar algebra of the determinant as a limit of a tensor product of Clifford algebras, which are classically known to be isomorphic to matrix algebras. We point out that if our upper bound on $\mathbf{R}(\sa_{\det_n})$ is optimal, by Theorem \ref{waringrelax} we would have that $\mathbf{R}_S(\det_n) \ge \Omega(2^{\omega n})$. For reference, the best known lower bounds on $\mathbf{R}_S(\det_n)$ are roughly $4^n$. Therefore if known lower bounds on $\mathbf{R}_S(\det_n)$ \emph{and} our upper bound on $\mathbf{R}(\sa_{\det_n})$ are roughly optimal, then $\omega = 2$.  The Waring rank of the determinant and $\omega$ have been studied primarily in different contexts (see \cite{alper2019syzygies,shafiei1,boij2020bound,derksen2015lower} for work on the former), and have only started to be related \cite{chiantini2018polynomials}.

\subsection{Paper outline}
In the next section we prove Theorems \ref{gendiff} and \ref{hankeldiff}. In Section 3 we give our applications. These follow quickly from Theorems \ref{gendiff} and \ref{hankeldiff}, using little more than Cauchy-Binet. In Section 4 we then define the apolar algebra of a polynomial, and briefly discuss tensor rank and bilinear complexity. Using these we then define $\mathbf{R}(\sa_f)$. We show in Example \ref{fsc} how the fast subset convolution algorithm of \cite{bjorklund2007fourier} is equivalent to an algorithm for multiplication in $\sa_{x_1x_2 \cdots x_n}$, and how one can deduce from known tensor rank upper bounds improved upper bounds on the number of non-scalar multiplications needed to compute subset convolution. Finally we give our upper bound on $\mathbf{R}(\sa_{\det_n})$.

\section{Computing the apolar inner product for skew circuits}
We start by giving an algorithm for computing \eqref{apolar} in the case that $g$ is the determinant of a symbolic matrix and $f$ is computed by a skew arithmetic circuit $C$.  This is a warmup for the special case when $g$ is the determinant of a symbolic Hankel matrix.

We fix the following notation for the rest of the paper. We denote by $|C|$ the total number of gates in the circuit $C$. Let $\mathbb{N}^k_d$ be the set of $k$-tuples with elements in $[d]$, and let $I(d,k) \subseteq \mathbb{N}^k_d$ be the set of strictly increasing sequences of length $k$ with elements in $[d]$; when $k=0$ we include the empty sequence in this set. Given a $d \times d$ matrix $X$ and tuples $\alpha,\beta \in I(d,k)$, we denote by $X[\alpha| \beta]$ the minor (determinant of a submatrix) of $X$ with rows indexed by $\alpha$ and columns indexed by $\beta$. We declare the ``empty minor'' $X[\ |\  ]$ to equal one. We use the convention of writing $\alpha_1, \ldots, \widehat{\alpha}_i, \ldots, \alpha_k$ to denote the sequence $\alpha_1, \ldots, \alpha_{i-1}, \alpha_{i+1}, \ldots, \alpha_k$ obtained from $\alpha$ by omitting $\alpha_i$. We call a monomial $x_1^{a_1} \cdots x_n^{a_n}$ \emph{square-free} if $a_i \in \{0,1\}$ for all $i$.

 For $f \in \mathcal{S}_d^n$, $\diff(f)$ denotes the vector space spanned by the partial derivatives of $f$ of all orders (this includes $f$ itself). For example, $\diff(x_1x_2)$ is the vector space spanned by $x_1x_2,x_1,x_2,$ and $1$. The next observation is a simple bound on this quantity for determinants of symbolic matrices, and has been essentially observed several times previously (e.g.~\cite[Lemma 1.3]{shafiei1}).

\begin{prop}\label{det_gen_bd}
Let $X = (\ell_{i,j})_{i,j \in [d]}$ be a symbolic matrix with entries in $\mathcal{S}_1^n$. Then $\diff(\det X)$ is contained in the space of minors of $X$. Hence 
\[\dim \diff(\det X) \le \sum_{i=0}^d \binom{d}{i}^2 = \binom{2d}{d} < 4^d.\] 
\end{prop}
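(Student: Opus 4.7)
The plan is to show by induction on the order $r$ of differentiation that any $r$-th order partial derivative of $\det X$ lies in the span of the $(d-r)\times(d-r)$ minors of $X$, and then bound the dimension by counting minors.

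For the base case $r=0$, $\det X = X[1,\dots,d\,|\,1,\dots,d]$ is itself a $d\times d$ minor. For the inductive step, suppose every $r$-th order partial derivative lies in the span of $(d-r)\times(d-r)$ minors. It suffices to show that for any $\alpha,\beta \in I(d,k)$ (with $k = d-r$) and any variable $x_m$, the derivative $\partial X[\alpha|\beta]/\partial x_m$ is a linear combination of $(k-1)\times(k-1)$ minors of $X$. Since the entries of $X$ are homogeneous linear forms, each entry $\ell_{i,j}$ satisfies $\partial \ell_{i,j}/\partial x_m \in \C$. Using the multilinearity of the determinant and Laplace (cofactor) expansion applied to the submatrix $X[\alpha|\beta]$, we can write
\[\frac{\partial}{\partial x_m} X[\alpha|\beta] = \sum_{i=1}^{k} \sum_{j=1}^k (-1)^{i+j}\frac{\partial \ell_{\alpha_i,\beta_j}}{\partial x_m}\, X[\alpha_1,\dots,\widehat{\alpha}_i,\dots,\alpha_k\,|\,\beta_1,\dots,\widehat{\beta}_j,\dots,\beta_k],\]
which is a scalar combination of $(k-1)\times(k-1)$ minors of $X$, completing the induction.

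Having established that every partial derivative of $\det X$ is a linear combination of minors of $X$, we bound $\dim \diff(\det X)$ by the total number of minors. For each $k \in \{0,1,\dots,d\}$, the number of $k\times k$ submatrices of $X$ is $\binom{d}{k}^2$, obtained by independently choosing $k$ rows and $k$ columns (with the $k=0$ case contributing the empty minor, equal to $1$). Summing and invoking the classical Vandermonde identity gives
\[\dim \diff(\det X) \le \sum_{k=0}^d \binom{d}{k}^2 = \binom{2d}{d} < 4^d,\]
where the final inequality follows from $\binom{2d}{d} \le \sum_{k=0}^{2d}\binom{2d}{k} = 4^d$ (strict since other terms are positive).

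The proof is essentially routine, so there is no serious obstacle; the only thing that requires a moment of care is arranging the cofactor expansion so that it is clear the derivative of a minor is again a scalar combination of minors, which relies crucially on the entries of $X$ being \emph{linear} (so that $\partial \ell_{i,j}/\partial x_m$ is a constant and the product rule does not introduce non-minor terms).
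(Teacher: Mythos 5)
Your proof is correct and follows essentially the same route as the paper: you differentiate a minor via the Leibniz/cofactor expansion to express the result as a linear combination of smaller minors, then count minors via Vandermonde's identity. The only cosmetic difference is that you spell out the induction on the order of differentiation and state the key formula directly for a general minor $X[\alpha|\beta]$, whereas the paper writes out the calculation for $\det X$ itself and then invokes ``repeated application''; the content is identical.
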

\begin{proof}
Let $\mathfrak{S}_d$ denote the symmetric group on $d$ elements. By the Leibniz formula for the determinant and the product rule, for any $l \in [n]$,
\begin{align*}
\frac{\partial \det X}{\partial x_l} &= \sum_{\sigma \in \mathfrak{S}_d} \sgn(\sigma) \sum_{i=1}^d \frac{\partial \ell_{i,\sigma(i)}}{\partial x_l} \prod_{j \neq i} \ell_{j, {\sigma(j)}} =\sum_{1 \le i,j \le d} \frac{\partial \ell_{i, j}}{\partial x_l} \sum_{\sigma \in \mathfrak{S}_d, \sigma(i) = j} \sgn(\sigma) \prod_{m \neq i} \ell_{m, {\sigma(m)}}\\
 &= \sum_{1 \le i,j \le d}  (-1)^{i+j} \frac{\partial \ell_{i, j}}{\partial x_l} X[1, \ldots, \widehat{i}, \ldots, d | 1, \ldots, \widehat{j}, \ldots, d].
\end{align*}
Note that $\frac{\partial \ell_{i, j}}{\partial x_l}$ is just a scalar. To see the last equality, consider the martix $X^{(ij)}$ obtained by setting the $(i,j)$th entry of $X$ to $1$, and all other entries in the $i$th row of $X$ to 0. Then $\det X^{(ij)} = \sum_{\sigma \in \mathfrak{S}_d, \sigma(i)=j} \sgn(\sigma) \prod_{m \neq i} \ell_{m,\sigma(m)}$, but at the same time by Laplace expansion along the $i$th row of $X^{(ij)}$, $\det X^{(ij)}  = (-1)^{i+j}X[1, \ldots, \widehat{i}, \ldots, d| 1, \ldots, \widehat{j}, \ldots, d]$.

This shows that the space of order-1 partial derivatives of $\det X$ is contained in the span of the degree-$(d-1)$ minors of $X$. That $\diff (\det X)$ is contained in the space of minors of $X$ follows by repeated application of this fact. Furthermore, since square $k \times k$ submatrices of $X$ can be identified by pairs of elements in $I(d,k)$ (their row and column indices), the vector space spanned by all minors of $X$ has dimension at most $\sum_{k=0}^d |I(d,k)|^2 = \sum_{k=0}^d \binom{d}{k}^2 = \binom{2d}{d}$. 
\end{proof}


\begin{lem}\label{linear_diff_gen}
Given as input a symbolic matrix $X = (\ell_{i,j})_{i,j \in [d]}$ with entries in $\mathcal{S}_1^n$, a linear combination $P$ of minors of $X$, and $l \in [n]$, we can compute a representation for $\frac{\partial P}{\partial x_l}$ as a linear combination of minors of $X$ with $4^d \poly(d)$ arithmetic operations.
\end{lem}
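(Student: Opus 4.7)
The plan is to generalize the differentiation formula derived in the proof of Proposition \ref{det_gen_bd} from $\det X$ to an arbitrary minor of $X$, then apply it termwise to the input representation of $P$ and collect like terms.

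First, I would observe that for any $k$ and any $\alpha,\beta \in I(d,k)$, applying exactly the same Leibniz-plus-Laplace argument as in Proposition \ref{det_gen_bd} to the $k \times k$ submatrix indexed by $(\alpha,\beta)$ yields
\[
\frac{\partial X[\alpha|\beta]}{\partial x_l} \;=\; \sum_{i=1}^{k}\sum_{j=1}^{k} (-1)^{i+j}\, \frac{\partial \ell_{\alpha_i,\beta_j}}{\partial x_l}\, X[\alpha_1,\ldots,\widehat{\alpha}_i,\ldots,\alpha_k \,|\, \beta_1,\ldots,\widehat{\beta}_j,\ldots,\beta_k].
\]
Each of the $k^2 \le d^2$ terms on the right-hand side is a scalar (a single partial derivative of an entry of $X$ with respect to $x_l$, which can be read off in $O(1)$ time) times a minor of $X$ indexed by a pair of strictly increasing sequences of length $k-1$. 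Thus the right-hand side is already in the desired output format.

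Given this, the algorithm is purely bookkeeping. I would maintain an associative array (or a flat table of size $\sum_{k=0}^d \binom{d}{k}^2 = \binom{2d}{d}$) whose keys are pairs $(\gamma,\delta)$ with $\gamma,\delta \in I(d,k)$ for some $0 \le k \le d-1$, and whose values are the accumulated coefficients of $X[\gamma|\delta]$ in $\partial P/\partial x_l$. For each of the at most $\binom{2d}{d} < 4^d$ minor-terms $c_{\alpha,\beta} X[\alpha|\beta]$ in the input representation of $P$, I would iterate over the $k^2 \le d^2$ choices of $(i,j)$, compute the coefficient $(-1)^{i+j} c_{\alpha,\beta}\, \partial \ell_{\alpha_i,\beta_j}/\partial x_l$, and add it to the table entry for the corresponding $(k-1)$-minor. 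The total number of arithmetic operations is therefore bounded by $4^d \cdot \mathrm{poly}(d)$, as claimed; the $\mathrm{poly}(d)$ overhead absorbs the $O(d^2)$ contributions per input minor together with the logarithmic cost of table lookups/insertions under a suitable ordering on $I(d,k)$.

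The one point that needs to be checked carefully, rather than obstructed, is that the derivative formula above does in fact produce minors indexed by \emph{strictly increasing} subsequences: deleting $\alpha_i$ from a strictly increasing sequence $\alpha$ again gives a strictly increasing sequence, so the output keys land in $\bigcup_{k} I(d,k-1) \times I(d,k-1)$ as required. Apart from this, every step is a direct specialization of the calculation already performed in Proposition \ref{det_gen_bd}, and no genuinely new idea is needed; the whole content of the lemma is that the bound $\binom{2d}{d} < 4^d$ on the dimension of the ambient space of minors controls both the storage and the running time.
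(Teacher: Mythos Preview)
Your proposal is correct and follows essentially the same approach as the paper: both derive the same single-minor differentiation formula from Proposition~\ref{det_gen_bd} and then collect coefficients across the at most $\binom{2d}{d}$ minors. The only difference is organizational---you iterate over input minors and \emph{push} contributions into an output table, whereas the paper iterates over output minors and \emph{pulls} contributions from the inputs---which affects only the $\poly(d)$ factor.
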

\begin{proof}
Let $P = \sum_{k=0}^d \sum_{\alpha, \beta \in I(d,k)} c_{\alpha,\beta}X[\alpha | \beta]$ and let $a_{i,j}^{(l)}$ be the coefficient of $x_l$ in $\ell_{i,j}$ (so the input consists of $l$ and the vectors $(c_{\alpha,\beta}) \in \mathbb{R}^{\binom{2d}{d}}$,  $(a_{i,j}^{(k)}) \in \mathbb{R}^{d^2n}$). Then by the same considerations as in the proof of  Proposition \ref{det_gen_bd},
\[\frac{\partial P}{\partial x_l} = \sum_{k=1}^d \sum_{\alpha, \beta \in I(d,k)} \sum_{1 \le i,j \le k} c_{\alpha,\beta} (-1)^{i+j} a_{i,j}^{(l)} X[\alpha_1, \ldots, \widehat{\alpha}_i, \ldots, \alpha_k | \beta_1, \ldots, \widehat{\beta}_j, \ldots, \beta_k].\]
Note that for $\alpha, \beta \in I(d,k)$, the coefficient of $X[\alpha | \beta]$ in the above equals
\[\sum_{1 \le i,j \le k} \sum_{\substack{\alpha', \beta' \in I(d,k+1) \\ \alpha = \alpha'_1, \ldots, \widehat{\alpha}'_i, \ldots, \alpha'_{k+1} \\ \beta = \beta'_1, \ldots, \widehat{\beta}'_j, \ldots, \beta'_{k+1}}} (-1)^{i+j} a_{i,j}^{(l)}c_{\alpha', \beta'}.\]
The numbers of pairs of sequences $\alpha', \beta'$ considered by the inner sum is na\"{i}vely bounded by $d^4$ (there are $d$ positions in $\alpha$ where we could try to insert a number in $[d]$ into to get an increasing sequence, and similarly for $\beta$), and hence the coefficient of each minor can be computed with $O(d^6)$ arithmetic operations. Since there are $\binom{2d}{d}$ minors, all coefficients can be computed with the stated number of operations.
\end{proof}

\gendiff

\begin{proof}
Say that gate $v$ in $C$ computes the polynomial $C_v$. We will compute the inner product \eqref{apolar} inductively: at gate $v$ we will compute and store $C_v^\partial$, a representation for $C_v(\frac{\partial}{\partial x_1}, \ldots, \frac{\partial}{\partial x_n})\det A$ as a linear combination of minors of $X$. $C_v^\partial$ will be stored as a vector of length $\binom{2d}{d}$ indexed by pairs of row and column sets. At the end of the algorithm we will have computed $f(\frac{\partial}{\partial x_1}, \ldots, \frac{\partial}{\partial x_n}) \det X = \langle f, \det X \rangle$ at the output gate. 

We start by computing and storing $\frac{\partial}{\partial x_l} \det X$ at input gate $x_l$, which by Lemma \ref{linear_diff_gen} can be done in $4^d \poly(d)$ time. Now suppose that gate $v$ takes input from gates $v'$ and $v''$, and that we have already computed $C_{v'}^\partial$ and $C_{v''}^\partial$. To compute $C_v^\partial$, there are two cases to consider:
\begin{enumerate}
\item $C_v = x_i \cdot C_{v'}$. Then $C_v^\partial = \frac{\partial}{\partial x_i} C_{v'}(\frac{\partial}{\partial x_1}, \ldots, \frac{\partial}{\partial x_n})\det A = \frac{\partial}{\partial x_i}  C_{v'}^\partial$. Using Lemma \ref{linear_diff_gen} this can be computed with $4^d \poly(d)$ operations.
\item $C_v = C_{v'} + C_{v''}$. Since differentiation is linear, $C_v^\partial = C_{v'}^\partial + C_{v''}^\partial$. Since $C_{v'}^\partial$ and $C_{v''}^\partial$ are vectors of length $\binom{2d}{d}$, it takes $\binom{2d}{d}$ operations to add them.
\end{enumerate}
Hence at each gate we use at most $4^d \poly(d)$ arithmetic operations, for a total of $4^d \poly(d)|C|$.
\end{proof}
We now show how Theorem \ref{gendiff} can be applied to obtain a deterministic algorithm for detecting simple cycles in graphs. This is not competitive, but it motivates our following improvement.

\begin{prop}\label{test_c}
Let $G$ be a graph on $n$ vertices. We can decide in $4^d \poly(n)$ time if $G$ contains a simple cycle of length $d$.
\end{prop}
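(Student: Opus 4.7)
The plan is to assemble the generating-polynomial reduction already sketched in the introduction and invoke Theorem \ref{gendiff}. I take $f = \det X$, where $X = A \cdot \diag(x_1, \ldots, x_n) \cdot A^T$ and $A \in \mathbb{Q}^{d \times n}$ is any matrix all of whose $d \times d$ minors are nonzero; a Vandermonde matrix $A_{i,j} = j^{i-1}$ suffices. By Cauchy--Binet this gives
\[
\det X = \sum_{S \in \binom{[n]}{d}} \det(A_S)^2 \prod_{i \in S} x_i,
\]
with strictly positive coefficients supported on square-free degree-$d$ monomials. I take $g = \tr(A_G^d)$, where $A_G$ is the symbolic adjacency matrix of $G$ defined in the introduction.

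The key preparation is to exhibit a skew arithmetic circuit $C$ for $g$ of size $\poly(n,d)$. I factor $A_G = D \cdot M$ where $D = \diag(x_1, \ldots, x_n)$ and $M \in \{0,1\}^{n \times n}$ is the constant adjacency matrix, and then compute $A_G^k e_i$ iteratively for each starting vertex $v_i$: multiplication by $M$ uses only additions of polynomial entries (a constant-valued matrix times a polynomial-valued vector, no non-scalar multiplications), while multiplication by $D$ multiplies each coordinate by a single variable $x_j$, which is precisely a skew multiplication. After $d$ such alternating steps, summing the $i$-th coordinates of $A_G^d e_i$ over all $i$ produces $\tr(A_G^d)$ via a skew circuit of size $O(n^3 d)$.

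Invoking Theorem \ref{gendiff} with $f = \det X$ and this $C$ then computes $\langle \det X, \tr(A_G^d) \rangle$ in time $4^d \poly(d) \cdot |C| = 4^d \poly(n)$. To conclude I would check positivity: a closed walk of length $d$ in $G$ contributes a monomial $x_{i_1} \cdots x_{i_d}$ to $\tr(A_G^d)$, and the apolar pairing of this monomial against the square-free polynomial $\det X$ vanishes unless the monomial is itself square-free, i.e., unless the walk traces out a simple cycle. For each simple $d$-cycle with vertex set $S$, the $d$ cyclic rotations yield $d$ equal square-free monomials, which together contribute $d \cdot \det(A_S)^2 > 0$ to $\langle \det X, \tr(A_G^d) \rangle$. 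Hence the computed value is strictly positive if and only if $G$ contains a simple $d$-cycle, and testing its sign completes the algorithm. There is no substantive obstacle; the only care needed is the skew-circuit construction for $\tr(A_G^d)$, which the factorization $A_G = DM$ makes routine.
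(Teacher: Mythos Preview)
Your proposal is correct and follows essentially the same route as the paper's proof: the paper also takes $X = V \diag(x_1,\ldots,x_n) V^{\mathrm{T}}$ for a Vandermonde $V$, invokes Cauchy--Binet, observes that $\langle \det X, \tr(A_G^d)\rangle \neq 0$ iff $G$ has a simple $d$-cycle, notes that $\tr(A_G^d)$ has a skew circuit of size $O(dn^3)$, and applies Theorem~\ref{gendiff}. Your explicit factorization $A_G = D M$ to build the skew circuit and your positivity bookkeeping are more detailed than the paper's one-line remarks, but the argument is the same; the only small item you omit is the observation that all intermediate numbers have $\poly(n)$ bit-length.
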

\begin{proof}
Let $V \in \mathbb{Q}^{d \times n}$ be the Vandermonde matrix with $V_{i,j}=j^i$. Let $X = V \cdot \text{diag}(x_1, \ldots, x_n)\cdot V^\mathrm{T}$. By the Cauchy-Binet Theorem,
\[\det X =  \sum_{\substack{\alpha \in I(n,d)}} V[1, \ldots, d| \alpha]^2 \prod_{i \in S}x_i.\]
Since any $d$ columns in $V$ are linearly independent, $V[1, \ldots, d| \alpha]^2 > 0$ for all $\alpha \in I(n,d)$. Furthermore, observe that $\tr(A_G^d)$ has nonnegative coefficients and contains a square-free monomial if and only if $G$ contains a simple cycle of length $d$. It follows that $\langle \det A, \tr(A_G^d) \rangle \neq 0$ if and only if $G$ contains such a cycle. In addition, $\tr(A_G^d)$ can be na\"ively computed by a skew circuit of size $O(dn^3)$. The theorem follows by applying Theorem \ref{gendiff}, noting that we only perform arithmetic with $\poly(n)$-bit integers.
\end{proof}
Note that the $(i,j)$th entry in the matrix $X$ in the proof of  Proposition \ref{test_c} equals $\sum_{k=1}^n k^{i+j} x_k$, and therefore $X$ is Hankel. We now show how this additional structure can be exploited.

 Fix linear forms $\ell_1, \ldots, \ell_{2d-1} \in \mathcal{S}_1^n$, and let $C_d$ be the symbolic matrix
\begin{align} \label{eq:arrangement}
\begin{bmatrix}
 \ell_1 & \ell_2 & \ell_3 & \cdots &  \cdots  &  \cdots  & \ell_{2d-2} & \ell_{2d-1}\\ 
\ell_2 & \ell_3 & \reflectbox{$\ddots$}  & \reflectbox{$\ddots$}  & \reflectbox{$\ddots$}  & \reflectbox{$\ddots$} & \reflectbox{$\ddots$}  & 0\\ 
\ell_3 & \reflectbox{$\ddots$}  & \reflectbox{$\ddots$}  &\reflectbox{$\ddots$}  & \reflectbox{$\ddots$} &\reflectbox{$\ddots$} & \reflectbox{$\ddots$}  & 0\\ 
\vdots & \reflectbox{$\ddots$}  & \reflectbox{$\ddots$}  & \reflectbox{$\ddots$} & \reflectbox{$\ddots$} & \reflectbox{$\ddots$}  & \reflectbox{$\ddots$}  & \vdots \\ 
\vdots & \reflectbox{$\ddots$}  & \reflectbox{$\ddots$} & \reflectbox{$\ddots$} & \reflectbox{$\ddots$} & \reflectbox{$\ddots$} & \reflectbox{$\ddots$} & \vdots \\ 
\vdots & \reflectbox{$\ddots$} & \reflectbox{$\ddots$} & \reflectbox{$\ddots$}  & \reflectbox{$\ddots$} &  \reflectbox{$\ddots$} & \reflectbox{$\ddots$} & \vdots\\ 
\ell_{2d-2} & \reflectbox{$\ddots$}  & \reflectbox{$\ddots$}  & \reflectbox{$\ddots$}  & \reflectbox{$\ddots$}  & \reflectbox{$\ddots$}  & \reflectbox{$\ddots$}  & \vdots \\ 
\ell_{2d-1} & 0 & 0  &  \cdots&\cdots  &  \cdots&  \cdots & 0
\end{bmatrix}.
\end{align}
The minors of the form $C_d[1,2,\ldots,k | b_1, \ldots, b_k]$, where $k \le d$ and $b_k \le 2d-k$, are called \emph{maximal}. For brevity we denote such a minor by $C_d[b_1, \ldots, b_k]$. Let $H_d$ be the submatrix of $C_d$ with row and column subscripts $1, \ldots, d$. It is readily seen that $H_d$ is a Hankel matrix.

\begin{prop}\label{cat_bd}
$\diff(\det H_d)$ is contained in the space of maximal minors of $C_d$. Furthermore, the number of maximal minors of $C_d$ is at most $\varphi^{2d}$.
\end{prop}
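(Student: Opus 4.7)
My approach is to treat the containment and the cardinality bound separately. Let $V$ denote the span of maximal minors of $C_d$. Since $\det H_d = C_d[1,2,\ldots,d \mid 1,2,\ldots,d]$ is itself a maximal minor (with $k=d$ and $b_d = d \le 2d-d$), the containment $\diff(\det H_d) \subseteq V$ reduces to showing that $V$ is closed under partial differentiation. To this end, I would apply the Leibniz-type computation from Proposition~\ref{det_gen_bd} to a maximal minor $C_d[1,\ldots,k \mid b_1,\ldots,b_k]$, obtaining a linear combination of $(k-1)$-minors of the form $C_d[\{1,\ldots,k\}\setminus\{r\} \mid \{b_1,\ldots,b_k\}\setminus\{b_j\}]$ with scalar coefficients determined by the $x_l$-coefficients of the linear forms $\ell_i$. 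When $r=k$ the resulting $(k-1)$-minor is already maximal. For $r<k$ I would rewrite it in terms of maximal minors using two Hankel symmetries of $C_d$: transpose symmetry $C_d[\alpha \mid \beta] = C_d[\beta \mid \alpha]$ (since $C_d$ is symmetric) and shift-invariance $C_d[\alpha \mid \beta] = C_d[\alpha+c \mid \beta-c]$ (since $C_d[i,j]$ depends only on $i+j$), supplemented by the Plücker-type straightening laws for Hankel minors of \cite{conca} when these two symmetries do not suffice.

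For the cardinality bound, I would count directly: for each $k \in \{0,1,\ldots,d\}$, a maximal $k$-minor is uniquely specified by the increasing column sequence $b_1 < \cdots < b_k$ in $\{1,\ldots,2d-k\}$, yielding $\binom{2d-k}{k}$ such minors. Summing and invoking the classical Fibonacci identity $\sum_{k \ge 0}\binom{n-k}{k} = F_{n+1}$ with $n = 2d$ gives
\[
\sum_{k=0}^{d} \binom{2d-k}{k} \;=\; F_{2d+1}.
\]
Using Binet's formula $F_{2d+1} = (\varphi^{2d+1} + \varphi^{-(2d+1)})/\sqrt{5}$ together with the identity $\varphi + \varphi^{-1} = \sqrt{5}$, the desired inequality $F_{2d+1} \le \varphi^{2d}$ follows immediately (with equality only in the degenerate case $d=0$).

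The main technical obstacle is the straightening step for $r<k$: transpose and shift alone are insufficient to express every Hankel minor as a combination of maximal ones (for instance, $C_d[1,3 \mid 2,4]$ is fixed by both operations yet is not a maximal minor, satisfying instead a genuine three-term Plücker relation such as $C_d[1,3 \mid 2,4] = C_d[1,2 \mid 2,5] + C_d[1,2 \mid 3,4]$), so the Plücker-type identities of \cite{conca} are genuinely required. The Fibonacci enumeration, by contrast, is routine.
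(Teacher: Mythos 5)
Your proposal is correct, but it takes a different route from the paper for the containment part. The paper simply invokes Corollary~2.2(c) of \cite{conca} --- which states that the maximal minors of the catalecticant $C_d$ span the entire space of minors of $H_d$ --- and then applies Proposition~\ref{det_gen_bd} (which already shows $\diff(\det H_d)$ lies inside the span of minors of $H_d$). You instead prove the containment from scratch by observing that $\det H_d = C_d[1,\dots,d\mid 1,\dots,d]$ is itself a maximal minor and then arguing that the span of maximal minors is closed under partial differentiation, straightening the non-maximal $(k-1)$-minors that arise (those with a deleted interior row $r<k$) via the Pl\"ucker-type identities of \cite{conca}. This is a sound strategy; it is essentially the calculation the paper carries out later, in detail, inside Lemma~\ref{cat_ldiff} using \cite[Lemma~2.1(a)]{conca}, namely $[1,\dots,\widehat{i},\dots,k\mid\beta]=\sum_{J\subseteq[k-1],\,|J|=k-i}[\beta+e(J)]$. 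Your correct observation that transpose symmetry and shift-invariance alone cannot dispose of a minor like $C_d[1,3\mid 2,4]$, and your verification of the three-term relation $C_d[1,3\mid 2,4]=C_d[1,2\mid 2,5]+C_d[1,2\mid 3,4]$, show you have identified exactly why the straightening law is indispensable. What the paper's route buys is brevity: the spanning statement is cited rather than re-derived, and the full straightening computation is deferred to the lemma where it is actually needed algorithmically. What your route buys is self-containment and the fact that you prove precisely $\diff(\det H_d)\subseteq\operatorname{span}(\text{maximal minors})$ rather than the a priori stronger statement that all minors of $H_d$ lie in this span. For the enumeration, both proofs count $\sum_{k=0}^{d}\binom{2d-k}{k}$; you evaluate this exactly as $F_{2d+1}$ and bound it via Binet's formula (noting the $d=0$ equality case, where the paper's strict inequality degenerates), while the paper quotes the bound $F_m\le\varphi^{m-1}$ directly. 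Both are fine.
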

\begin{proof}
The maximal minors of $C_d$ span the space of minors of $H_d$ by Corollary 2.2(c) of \cite{conca}. Hence by Proposition \ref{det_gen_bd}, they span the space of partial derivatives of $\det H_d$. The second claim follows by noting that the number of maximal minors of degree $k$ equals $|I(2d-k,k)| = \binom{2d-k}{k}$. Hence the total number of maximal minors equals $\sum_{k=0}^d \binom{2d-k}{k} < \varphi^{2d}$. In the last step we used the facts that the $d$th Fibonacci number satisfies $F_d = \sum_{k=0}^{\lfloor \frac{d-1}{2} \rfloor} \binom{d+k-1}{k}$, and that $F_d \le \varphi^{d-1}$.
\end{proof}

\begin{lem}\label{cat_ldiff}
Given as input a linear combination $P$ of maximal minors of $C_d$ and $l \in [n]$, we can compute a representation for $\frac{\partial P}{\partial x_l}$ as a linear combination of maximal minors of $C_d$ with $\varphi^{2d} \poly(d)$ arithmetic operations.
\end{lem}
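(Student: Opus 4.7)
The plan is to follow the skeleton of Lemma~\ref{linear_diff_gen}, but carry out all computations in the basis of maximal minors of $C_d$. Write $P = \sum_{k=0}^{d}\sum_{b} c_b\, C_d[b]$ where $b$ ranges over valid maximal-minor column sequences, and let $a^{(l)}_m$ denote the coefficient of $x_l$ in $\ell_m$. The target is to produce the coefficient vector $(c'_\delta)$ for which $\partial P/\partial x_l = \sum_\delta c'_\delta\, C_d[\delta]$.

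The first step is to differentiate each maximal minor via Leibniz/Laplace, mirroring the calculation from Proposition~\ref{det_gen_bd}:
\[
\frac{\partial C_d[b_1,\dots,b_k]}{\partial x_l} = \sum_{i,j=1}^{k} (-1)^{i+j}\, a^{(l)}_{i+b_j-1}\, C_d\bigl[1,\dots,\widehat{i},\dots,k \,\big|\, b_1,\dots,\widehat{b_j},\dots,b_k\bigr].
\]
The constraint $b_k \le 2d-k$ on a maximal minor guarantees that every entry $\ell_{i+b_j-1}$ lies in the nonzero (upper-left) region of $C_d$, so no additional vanishing case needs to be handled.

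The second (and main) step is to re-express each sub-minor above as a linear combination of maximal minors, since its row set $\{1,\dots,\widehat i,\dots,k\}$ is not a prefix of $\mathbb{N}$ when $i<k$. This is the algorithmic content of the catalecticant straightening of \cite[Cor.~2.2(c)]{conca} that underlies Proposition~\ref{cat_bd}. For $i=1$ a single anti-diagonal shift suffices,
\[
C_d[2,\dots,k\,|\,\gamma_1,\dots,\gamma_{k-1}] = C_d[1,\dots,k-1\,|\,\gamma_1+1,\dots,\gamma_{k-1}+1],
\]
since the entries of $C_d$ depend only on the sum of their two indices. For larger $i$ one iterates the two-term Pl\"ucker-style identity illustrated by
\[
C_d[1,3\,|\,\gamma_1,\gamma_2] = C_d[1,2\,|\,\gamma_1,\gamma_2+1] + C_d[1,2\,|\,\gamma_1+1,\gamma_2]
\]
to drive the deleted row index down to $k$. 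Each such straightening depends only on $i$ and $\gamma$, introduces only $\poly(d)$ maximal-minor terms, and is computable in $\poly(d)$ scalar operations.

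The third step is bookkeeping. For each triple $(b,i,j)$ with $|b|=k$, run the Step~2 straightening on $C_d[1,\dots,\widehat i,\dots,k\,|\,b\setminus\{b_j\}]$ and add $(-1)^{i+j} c_b\, a^{(l)}_{i+b_j-1}$ times the resulting straightening sign to each $c'_\delta$. Each input $C_d[b]$ triggers $k^2 = O(d^2)$ triples and each triple contributes $\poly(d)$ updates, so the total work per input maximal minor is $\poly(d)$; multiplying by the $\varphi^{2d}$ bound from Proposition~\ref{cat_bd} yields the claimed $\varphi^{2d}\poly(d)$ running time.

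The main obstacle is the straightening in Step 2. Its correctness follows from the Hankel shift plus Pl\"ucker-type relations in the catalecticant setting, but its efficiency requires showing that $C_d[1,\dots,\widehat i,\dots,k\,|\,\gamma]$ genuinely admits a $\poly(d)$-term expansion into maximal minors rather than an exponentially branching one. This is precisely what the catalecticant straightening law in \cite{conca} provides, and the zero corner of $C_d$ (beyond the mere Hankel structure of $H_d$) is what makes the expansion terminate in a single ``staircase''.
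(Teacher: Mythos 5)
The proposal correctly identifies Steps 1 and 3 (Laplace expansion and bookkeeping), matches the paper's proof there, and correctly identifies the shift/straightening machinery from \cite{conca} as the key tool. However, there is a genuine gap in Step 2, which is also where the paper does the real work.

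You claim that each non-maximal minor $C_d[1,\dots,\widehat{i},\dots,k\,|\,\gamma]$ expands into only $\poly(d)$ maximal minors and that iterating the two-term Pl\"ucker-style identity ``terminates in a single staircase'' thanks to the zero corner of $C_d$. Neither claim holds. The paper's own expansion identity (derived from the row-shift and from \cite[Lemma 2.1(a)]{conca}) is
\[
[1, \ldots, \widehat{i}, \ldots, k\,|\,\beta] \;=\; \sum_{J \subseteq [k-1],\ |J| = k-i} [\beta + e(J)],
\]
which has $\binom{k-1}{k-i}$ terms; for $i \approx k/2$ this is exponential in $k$, and if $\beta$ has no two consecutive entries differing by $1$ then all of these terms are genuinely nonzero. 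Iterating your two-term identity $k-i$ times also produces $2^{k-i}$ symbolic terms unless carefully combined. The zero corner of $C_d$ only controls the column range; it does not cause collapsing. So ``each triple contributes $\poly(d)$ updates'' is false, and Step 3's per-minor cost analysis does not go through.

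This is exactly the difficulty the paper's proof addresses: rather than expanding each non-maximal minor independently, it groups the contributions by shift pattern and computes the coefficients $b(i,\gamma)$ once, then runs a dynamic program over states $A^k(i,j)$ (indexed by the number of shifts $i$ placed among the first $j$ column positions) where each state is a coefficient vector of size $O(k\binom{2d-k}{k})$. Each transition costs $O(k\binom{2d-k}{k})$ operations and there are $O(d^2)$ states per degree $k$, giving the $\varphi^{2d}\poly(d)$ total. Without some such aggregation your argument, as written, would run in time $\binom{2d-k}{k} \cdot \max_i \binom{k-1}{k-i} \cdot \poly(d)$ per degree, which exceeds the claimed bound. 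You would need to either supply the dynamic program (or an equivalent batching argument) or give a genuinely polynomial-size straightening, and the latter is not available.
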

\begin{proof}
For brevity we will write $[\alpha]$ for the minor $C_d[\alpha]$. Let $P = \sum_{k=0}^d \sum_{\beta \in I(2d-k,k)} c_{\beta}[\beta]$, and say that the coefficient of $x_l$ in $(C_d)_{i,j}$ is $a_{i,j}^{(l)}$. As in Lemma \ref{linear_diff_gen},
\[\frac{\partial P}{\partial x_l} = \sum_{k=1}^d \sum_{\beta \in I(2d-k,k)} c_{\beta} \sum_{1 \le i,j \le k}  (-1)^{i+\beta_j} a_{i,\beta_j}^{(l)} [1, \ldots, \widehat{i}, \ldots, k | \beta_1, \ldots, \widehat{\beta}_j, \ldots, \beta_k].\]
Note that the only minors with nonzero coefficient in this expression are of the form $[1, \ldots, \widehat{i}, \ldots, k| \gamma]$ for $k \in [d], i \in [k]$ and $\gamma \in I(2d-k,k-1)$. Call the coefficient of this minor in the above $b(i,\gamma)$. Then
\[b(i,\gamma) = \sum_{1 \le j \le k} \sum_{\substack{\beta \in I(2d-k,k) \\ \gamma = (\beta_1, \ldots, \widehat{\beta}_j, \ldots, \beta_{k})}} c_{\beta}(-1)^{i+\beta_j} a_{i,\beta_j}^{(l)}. \]
The number of sequences $\beta$ considered by the inner sum is at most $O(d^2)$, and hence $b(i,\gamma)$ can be computed with $O(d^3)$ additions and multiplications. We can thus compute
\begin{equation}\label{minor_res}
\frac{\partial P}{\partial x_l} = \sum_{k=1}^{d} \sum_{i=1}^{k}\sum_{\gamma \in I(2d-k,k-1)} b(i,\gamma)[1, \ldots, \widehat{i}, \ldots, k|\gamma]
\end{equation}
with $d^4\sum_{k=1}^d |I(2d-k,k-1)| \le \varphi^{2d} \poly(d)$ arithmetic operations. Note that this expresses $\frac{\partial P}{\partial x_l}$ as a linear combination of minors that are not necessarily maximal. We now fix this.

 We first claim that for all $i \in [k]$ and $\beta \in I(2d-k,k-1)$,
 \[[1, \ldots, \widehat{i}, \ldots, k|\beta] = \sum_{J \subseteq [k-1], |J| = k-i} [e(J) + (1, \ldots, k-1) | \beta]\]
 where $e(J)$ is the indicator vector of the set $J$. This holds since when $J = \{i, \ldots, k-1\}$, $e(J) + (1, \ldots, k-1) = (1, \ldots, \widehat{i}, \ldots, k)$, and for all other $J$,  $e(J) + (1, \ldots, k-1)$ will have a repeated value and hence $[ e(J) + (1, \ldots, k-1) | \beta] = 0$. 
 
 Given this claim, it follows from \cite[Lemma 2.1(a)]{conca} that
 \begin{equation*}
 [1, \ldots, \widehat{i}, \ldots, k|\beta] = \sum_{J \subseteq [k-1], |J| = k-i} [\beta + e(J)],
 \end{equation*}
and so letting $Q_k$ be the degree-$k$ part of Equation \ref{minor_res},
 \[Q_k = \sum_{i=1}^{k+1}\sum_{\beta \in I(2d-k-1,k)} b(i,\beta) \sum_{J \subseteq [k], |J| = k+1-i} [\beta + e(J)].\]
We now show how to efficiently compute the coefficients of the maximal minors in this expression from the already computed $b(i,\gamma)$'s.

Let $0 \le k \le d-1$ be fixed.  For $\beta \in I(2d-k-1,k)$ and integers $i,j$ where $0 \le i \le j \le k$, let $D(\beta,i,j,k) \subseteq \{0,1\}^k$ be the set of binary vectors of length $k$ containing exactly $i$ ones, whose last $k-j$ entries are zero, and whose summation with $\beta$ is strictly increasing everywhere except possibly at positions $j$ and $j+1$ (that is, we may have $w_j+\beta_j = w_{j+1}+\beta_{j+1}$). Define 
\[A^k(i,j) \coloneqq \sum_{\beta\in I(2d-k-1,k)} b(k+1-i,\beta) \sum_{w \in D(\beta,i,j,k)} [\beta + w].\]
Note that $\sum_{i=0}^{k} A^k(i,k) = Q_k$, so it suffices to show how to compute $A^k(i,j)$ for all $i,j$. We do this with a dynamic program. When we store $A^k(i,j)$ we will store all coefficients of maximal minors arising in the above definition, even though such a minor might contain a repeated column and hence equal zero. The minors arising in this definition are specified by sequences of length $k$ with maximum value $2d-k$ that are strictly increasing everywhere but possibly at one position. Hence the number of such sequences is at most $k \binom{2d-k}{k}$. 

For the base cases, we have
\begin{align*}
A^k(0,j) &= \sum_{\beta \in I(2d-k-1,k)} b(k+1,\beta) [\beta],\\
A^k(i,i) &=  \sum_{\beta \in I(2d-k-1,k)} b({k+1-i,\beta}) [\beta + e(\{1, \ldots, i\})].
\end{align*}
Now suppose we have computed $A^k(i,j-1)$ and $A^k(i-1,j-1)$. Then
\begin{align*}
A^k(i,j) &=  \sum_{\beta \in I(2d-k-1,k)} b(k+1-i,\beta) \left (\sum_{\substack{w \in B(\beta,i,j,k), \\ w_j=0}} [\beta + w] + \sum_{\substack{w \in D(\beta,i,j,k),\\ w_j =1}} [\beta + w]\right ) \\
&= \sum_{\beta \in I(2d-k-1,k)} b(k+1-i,\beta) \sum_{\substack{w \in D(\beta,i,j-1,k), \\ \beta +w \text{ is strictly increasing}}} [\beta + w]  \\
&\qquad +\sum_{\beta \in I(2d-k-1,k)} b(k+1-i,\beta) \sum_{w \in D(\beta,i-1,j-1,k)} [\beta + w + e(\{j\})] .
\end{align*}
The first part of the sum can be computed from $A^k(i,j-1)$ by setting the coefficient of any maximal minor with a repeated column equal zero, and the second sum can be computed from $A^k(i-1,j-1)$ by setting the coefficient of $[\beta]$ to that of $[\beta - e(\{j\})]$. Hence $A^k(i,j)$ can be computed with $O(k\binom{2d-k}{k})$ arithmetic operations. It follows that we can represent $\frac{\partial P}{\partial x_l} = \sum_{i=0}^{d-1} Q_i$ in the space of maximal minors using $\varphi^{2d} \poly(d)$ arithmetic operations.
\end{proof}

With this we have the following analog of Theorem \ref{gendiff}. We omit the proof as it is almost exactly the same, we just work in the space of maximal minors rather than minors, using Lemma \ref{cat_ldiff} to differentiate instead of Lemma \ref{linear_diff_gen}.

\hankeldiff

\begin{cor}\label{hankel_cycle}
Let $G$ be a graph on $n$ vertices. We can decide in $\varphi^{2d} \poly(n)$ time if $G$ contains a simple cycle of length $d$.
\end{cor}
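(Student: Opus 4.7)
The plan is to essentially reprise the proof of Proposition \ref{test_c}, but observing that the symbolic matrix arising in that argument is already Hankel, so we can invoke Theorem \ref{hankeldiff} in place of Theorem \ref{gendiff} to save over the $4^d$ bound.

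Concretely, I would first let $V \in \mathbb{Q}^{d \times n}$ be the Vandermonde matrix with $V_{i,j} = j^i$, and set $X = V \cdot \diag(x_1, \ldots, x_n) \cdot V^\mathrm{T}$. The $(i,j)$-entry of $X$ then equals $\sum_{k=1}^n k^{i+j} x_k$, which depends only on $i+j$; hence $X$ is a symbolic Hankel matrix with entries in $\mathcal{S}_1^n$, so it has exactly the form required by Theorem \ref{hankeldiff}. By Cauchy--Binet,
\[
\det X = \sum_{\alpha \in I(n,d)} V[1,\ldots,d \mid \alpha]^2 \prod_{i \in \alpha} x_i,
\]
and since any $d$ columns of $V$ are linearly independent, every coefficient $V[1,\ldots,d \mid \alpha]^2$ is strictly positive.

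Next, as in Proposition \ref{test_c}, I form $\tr(A_G^d)$ where $A_G$ is the matrix whose $(i,j)$-entry is $x_i$ if there is an edge from $v_i$ to $v_j$ and zero otherwise. This polynomial has nonnegative coefficients, and its square-free monomials are in bijection with simple $d$-cycles of $G$ (each cycle contributing a positive coefficient). Combined with the strict positivity of all square-free coefficients of $\det X$, this yields the equivalence
\[
\langle \det X, \tr(A_G^d) \rangle \neq 0 \iff G \text{ contains a simple cycle of length } d.
\]

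Finally, $\tr(A_G^d)$ can be computed by a na\"ive skew circuit of size $O(d n^3)$: build $A_G^d$ one factor at a time, each multiplication by $A_G$ being skew because every nonzero entry of $A_G$ is a single variable, and then sum the diagonal. Applying Theorem \ref{hankeldiff} to this circuit and to the Hankel matrix $X$ gives the inner product in $\varphi^{2d} \poly(d) \cdot O(d n^3) = \varphi^{2d} \poly(n)$ arithmetic operations. The only minor subtlety to address is that the intermediate integers remain polynomially bounded (since all coefficients involved are $\poly(n)$-bit, as in Proposition \ref{test_c}), so the arithmetic cost translates to an honest bit-complexity bound. I do not anticipate any real obstacle; the entire argument is a direct transfer of Proposition \ref{test_c} using the Hankel structure already highlighted in the paragraph preceding \eqref{eq:arrangement}.
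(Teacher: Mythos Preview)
Your proposal is correct and follows essentially the same approach as the paper's own proof: both use the Vandermonde construction $X = V\,\diag(x_1,\ldots,x_n)\,V^{\mathrm T}$, observe that its entries depend only on $i+j$ so that $X$ is Hankel, reduce cycle detection to nonvanishing of $\langle \det X, \tr(A_G^d)\rangle$ via Cauchy--Binet, and then invoke Theorem~\ref{hankeldiff} on a $\poly(n)$-size skew circuit for $\tr(A_G^d)$. Your write-up simply spells out a few details (the skewness of the circuit, bit-complexity) that the paper leaves implicit.
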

\begin{proof}
Let $V \in \mathbb{Q}^{d \times n}$ be the Vandermonde matrix with $V_{i,j}=j^i$, and  $X = V \cdot \diag(x_1, \ldots, x_n)\cdot V^\mathrm{T}$. By the argument of \ref{test_c}, $\langle \det X , \tr(A_G)^d \rangle \neq 0$ if and only if $G$ contains a simple cycle of length $d$. Note that the $(i,j)$th entry in $X$ equals $\sum_{k=1}^n k^{i+j} x_k$, and therefore $X$ is Hankel. We conclude by applying Theorem \ref{hankeldiff} to compute $\langle \det X, \tr(A_G)^d \rangle$, as $\tr(A_G^d)$ can be computed by a skew circuit of size $\poly(n)$.
\end{proof}
\begin{remark}
This algorithm extends to detecting subgraphs of bounded pathwidth on $d$ vertices by using the construction of the subgraph generating polynomial given in \cite[Appendix B]{extensor}.
\end{remark}
\section{Applications}
In this section we give our applications of Theorems 1 and 2.

\begin{cor}\label{pit}
Given matrices $A_1, \ldots, A_n \in \mathbb{Q}^{d \times d}$, we can decide if their span contains an invertible matrix in time $4^d \poly(N)$, where $N$ denotes the size of the input.
\end{cor}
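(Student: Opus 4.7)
Our plan is to reduce the problem to testing whether $\langle \det X, \det X \rangle$ vanishes, where $X \coloneqq \sum_{i=1}^n x_i A_i$ is the $d \times d$ symbolic matrix associated to the input. For any $f = \sum_{\alpha} c_\alpha x^\alpha \in \mathcal{S}_d^n$, the identity $\partial^\alpha x^\beta = \delta_{\alpha,\beta}\,\alpha!$ (valid whenever $|\alpha| = |\beta| = d$) gives
\[
\langle f, f \rangle \;=\; \sum_{\alpha} c_\alpha^2 \, \alpha! ,
\]
a sum of squares of nonnegative rationals. Hence $\langle f, f \rangle \neq 0$ if and only if $f \not\equiv 0$. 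Applied to $f = \det X$, this means $\langle \det X, \det X \rangle \neq 0$ iff $\det X$ is not identically zero, \ie iff the span of $A_1, \ldots, A_n$ contains an invertible matrix.

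To compute $\langle \det X, \det X \rangle$ via Theorem \ref{gendiff}, we need to exhibit one copy of $\det X$ as a skew arithmetic circuit in $x_1, \ldots, x_n$. For this we take the construction of Mahajan and Vinay \cite{mahajan1997combinatorial}, which realizes the determinant of a generic $d \times d$ symbolic matrix as a skew circuit of size $\poly(d)$, and substitute for each symbolic entry the linear form $\sum_{k=1}^n (A_k)_{i,j}\, x_k$. This yields a skew circuit $C$ of size $|C| \le \poly(N)$ computing $g \coloneqq \det X \in \mathcal{S}_d^n$.

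Applying Theorem \ref{gendiff} with this $X$ and this $C$ then computes $\langle \det X, g \rangle = \langle \det X, \det X \rangle$ using $4^d |C| \poly(d) = 4^d \poly(N)$ arithmetic operations on rationals of bit-length $\poly(N)$; we return \emph{yes} exactly when the returned value is nonzero. There is essentially no obstacle beyond assembling these ingredients: the apolar product is a nondegenerate pairing on $\mathcal{S}_d^n$, so the identity test for $\det X$ collapses to a single application of Theorem \ref{gendiff}, and the deterministic polynomial-time constructibility of the Mahajan--Vinay skew circuit ensures that the whole algorithm is deterministic.
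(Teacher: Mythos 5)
Your proposal is correct and follows the paper's argument essentially verbatim: both reduce SING to checking $\langle \det X, \det X \rangle = \sum_\alpha c_\alpha^2 \alpha! \neq 0$ and both obtain a skew circuit for $\det X$ by substituting the linear forms $\sum_k (A_k)_{ij} x_k$ into the Mahajan--Vinay skew circuit for $\det_d$, then invoke Theorem~\ref{gendiff}. (Minor wording nit: the terms $c_\alpha^2 \alpha!$ are nonnegative, but the $c_\alpha$ themselves need not be.)
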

\begin{proof}
Let $X = \sum_{i=1}^n x_iA_i$. First note that $\lin(A_1, \ldots, A_n)$ contains an invertible matrix if and only if $\det X \not \equiv 0$. Writing $\det X = \sum_{\alpha \in \mathbb{N}_d^n} c_\alpha x^\alpha$ for some coefficients $c_\alpha$ (at least one of which will be nonzero iff the answer is ``yes''), observe that $\langle \det X, \det X \rangle =  \sum_\alpha c_\alpha^2 \alpha!$. It follows that $\lin(A_1, \ldots, A_n)$ contains an invertible matrix if and only if this quantity is nonzero.

It is shown in \cite{mahajan1997combinatorial} that $\det_d$ can be expressed as a skew circuit of size $O(d^4)$, and the construction of this circuit is linear in the output size. Hence we can construct a circuit for $\det X$ by replacing the input variable $x_{ij}$ in this circuit with the $(i,j)$th entry of $X$. The theorem follows by applying Theorem \ref{gendiff} to the matrix $X$ and this circuit, noting that all numbers have bit-length $\poly(N)$ throughout the algorithm.
\end{proof}

\begin{cor}\label{matroid1}
Suppose we are given a matrix $A \in \mathbb{Q}^{km \times kn}$, where $n \ge m$, representing a matroid $M$ with groundset $[kn]$, and a partition $\pi$ of $[kn]$ into parts of size $k$. Then we can decide if the union of any $m$ parts in $\pi$ are independent in $M$ in time $4^{km} \poly (N)$, where $N$ is the size of the input.
\end{cor}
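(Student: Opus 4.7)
The plan is to encode the problem as computing $\langle \det X, g \rangle$ for an appropriate $km \times km$ symbolic matrix $X$ and skew circuit $g$, and then invoke Theorem \ref{gendiff} with $d = km$. This mirrors the cycle-detection proof of Proposition \ref{test_c}, with the $k$-wise part structure being captured by raising variables to the $k$th power.

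First, I would introduce fresh variables $y_1, \ldots, y_n$, one per part, and form the diagonal matrix $D = \diag(y_{\pi(1)}, \ldots, y_{\pi(kn)})$, where $\pi(i)$ denotes the index of the part containing element $i$. Define the $km \times km$ symbolic matrix $X := A \cdot D \cdot A^\mathrm{T}$, whose entries are linear forms in $y_1, \ldots, y_n$. By the Cauchy–Binet formula,
\[
\det X = \sum_{S \in \binom{[kn]}{km}} \det(A_{\cdot,S})^2 \prod_{i \in S} y_{\pi(i)}.
\]
For $\alpha \in \binom{[n]}{m}$ the unique $S$ giving the monomial $\prod_{j \in \alpha} y_j^{k}$ is $P_\alpha := \bigcup_{j \in \alpha} \pi^{-1}(j)$, so the coefficient of $\prod_{j \in \alpha} y_j^{k}$ in $\det X$ is $\det(A_{\cdot, P_\alpha})^2$, which is nonzero iff the union of the $m$ parts in $\alpha$ is independent in $M$.

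Second, take $g := e_m(y_1^k, \ldots, y_n^k) = \sum_{\alpha \in \binom{[n]}{m}} \prod_{j \in \alpha} y_j^{k}$. Since $\det X$ and $g$ are both homogeneous of degree $km$, and since the apolar pairing of distinct monomials vanishes, we get
\[
\langle \det X, g \rangle = (k!)^m \sum_{\alpha \in \binom{[n]}{m}} \det(A_{\cdot,P_\alpha})^2.
\]
This is a sum of nonnegative terms, hence is nonzero if and only if some $P_\alpha$ is independent in $M$, which is exactly the decision we want to make.

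Third, I would construct a skew circuit of size $\poly(N)$ for $g$. The standard dynamic program $e_m^{(j)} = e_m^{(j-1)} + y_j \cdot e_{m-1}^{(j-1)}$ gives a skew circuit of size $O(mn)$ for $e_m(y_1,\ldots,y_n)$. To accommodate the substitution $y_j \mapsto y_j^{k}$, rewrite each multiplication $y_j^{k} \cdot P$ as $k$ nested skew multiplications $y_j \cdot (y_j \cdot ( \cdots (y_j \cdot P)\cdots))$. The resulting circuit $C$ computes $g$ and has size $O(mnk) = \poly(N)$. Applying Theorem \ref{gendiff} to $X$ and $C$ with $d = km$ yields $\langle \det X, g \rangle$ using $4^{km} \cdot |C| \cdot \poly(km) = 4^{km}\poly(N)$ arithmetic operations; we output ``yes'' iff the result is nonzero. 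The numerical size of intermediate quantities stays polynomial in $N$ since the entries of $X$ and of $C$ have polynomial bit-length and the algorithm of Theorem \ref{gendiff} only performs additions and multiplications of such quantities.

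The only nontrivial point is the second step: one must check that the $k$-th power substitution in $g$ correctly isolates the ``part-multilinear'' terms of $\det X$ without interference from monomials of $\det X$ where some part contributes strictly between $1$ and $k-1$ elements. This is immediate from the fact that the apolar pairing of two distinct monomials is zero, so only the monomials $\prod_{j \in \alpha} y_j^{k}$ appearing in $g$ survive, and each such monomial comes from the unique subset $P_\alpha$ in the Cauchy–Binet expansion.
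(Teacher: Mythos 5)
Your proposal is correct and follows the same overall scheme as the paper: encode the partition constraint into a symbolic $km\times km$ matrix via Cauchy--Binet, build a $\poly(N)$-size skew circuit whose monomials pick out exactly the admissible unions of parts, argue the apolar pairing is a nonnegative sum that vanishes iff no such union is independent, and invoke Theorem~\ref{gendiff} with $d=km$. The only genuine (and arguably cleaner) deviation is the encoding: the paper keeps one variable $x_i$ per groundset element of $[kn]$, forms $X = A\cdot\diag(x_1,\ldots,x_{kn})\cdot A^{\mathrm{T}}$, and takes $g = \bigl(\sum_{S\in\pi}\prod_{i\in S}x_i\bigr)^m$, reading off the answer from square-free monomials of $\det X$; you collapse each part to a single variable $y_j$, take $X = A\cdot\diag(y_{\pi(1)},\ldots,y_{\pi(kn)})\cdot A^{\mathrm{T}}$, and use $g = e_m(y_1^k,\ldots,y_n^k)$, reading the answer off the $k$-th-power monomials $\prod_{j\in\alpha}y_j^k$. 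Both $g$'s are easily realized by skew circuits of size $\poly(N)$ (your nested-multiplication unrolling of $y_j^k$ is the right way to keep it skew), and in both cases the pairing is a nonnegative sum of $\det(A_{P})^2$ over candidate bases, so the nonvanishing criterion and the final $4^{km}\poly(N)$ bound coincide. Your observation that uniqueness of $P_\alpha$ follows because each part has exactly $k$ elements is the one point one needs to check in your version, and you flag it correctly.
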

\begin{proof}
Let $g \coloneqq (\sum_{S \in \pi} \prod_{i \in S} x_i)^m$. It is easily seen that the square-free monomials appearing in $g$ correspond to unions of $m$ elements in $\pi$, and that $g$ can be computed  by a skew circuit of size $\poly(n)$. Next, let $X = A \cdot \diag(x_1, \ldots, x_n)\cdot A^\mathrm{T}$. By Cauchy-Binet,

\[\det X =  \sum_{S \in \text{Bases}(M)} \det(B_S)^2 \prod_{i \in S}x_i,\]
Note that the same monomial appears in the expansion of $g$ and $\det X$ exactly when there is such an independent set in $M$, and then since $g$ and $\det X$ have non-negative coefficients, $\langle \det X, g \rangle \neq 0$ if and only if an independent set in $M$ is the union of $m$ blocks in $\pi$. We conclude by applying Theorem \ref{gendiff}.
\end{proof}

Using the same trick as in \cite{marx} we can use Corollary \ref{matroid1} to solve the $k$-matroid intersection problem.
\begin{cor}[$k$-Matroid Intersection]\label{kmi}
Suppose we are given matrices $B_1, \ldots, B_k \in \mathbb{Q}^{m \times n}$ representing matroids $M_1, \ldots ,M_k$ with the common groundset $[n]$. We can decide if $M_1, \ldots, M_k$ share a common base in time $4^{km} \poly(N)$, where $N$ is the size of the input.
\end{cor}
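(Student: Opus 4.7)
The plan is to reduce the $k$-matroid intersection problem to matroid $k$-parity (Corollary \ref{matroid1}) by means of a direct-sum construction, as in \cite{marx}. This reduction turns the question of finding a common base of $k$ matroids on a groundset of size $n$ into the question of finding $m$ parts whose union is independent in a single matroid on a groundset of size $kn$, partitioned into $n$ blocks of size $k$.

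Concretely, I would form the block-diagonal matrix
\[
A \;=\; \begin{bmatrix} B_1 & & \\ & \ddots & \\ & & B_k \end{bmatrix} \;\in\; \mathbb{Q}^{km \times kn},
\]
which represents the direct sum $M \coloneqq M_1 \oplus \cdots \oplus M_k$ on the groundset $[kn]$ (viewed as $k$ disjoint copies of $[n]$). I would then label the columns so that the $j$-th column of $B_i$ corresponds to element $(i-1)n + j$ of $[kn]$, and define the partition $\pi = \{P_1, \ldots, P_n\}$ of $[kn]$ by $P_j \coloneqq \{j,\ n+j,\ 2n+j,\ \ldots,\ (k-1)n + j\}$. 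Each block $P_j$ thus collects the ``copies'' of element $j$ across the $k$ matroids.

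The key correspondence to verify is that, for a subset $S \subseteq [n]$ of size $m$, the union $\bigcup_{j \in S} P_j$ is independent in $M$ if and only if $S$ is simultaneously a base of each $M_1, \ldots, M_k$. This follows from the definition of the direct sum: independence in $M$ of a set $T \subseteq [kn]$ is equivalent to the independence in each $M_i$ of the subset of $T$ coming from the $i$-th block of coordinates; applying this to $T = \bigcup_{j \in S} P_j$ shows that independence is equivalent to $S$ being independent in every $M_i$, and since $|S| = m$ equals the rank of each $M_i$, this is the same as $S$ being a common base.

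With this reduction in hand, the corollary follows immediately by applying Corollary \ref{matroid1} to $A$, $\pi$, and the parameters $k$ and $m$, yielding the runtime $4^{km} \poly(N)$ as claimed. There is no real obstacle here beyond the bookkeeping of the reduction; the only detail to be careful about is that the direct sum has the correct rank ($km$) and that the input size $N$ to Corollary \ref{matroid1} is polynomial in the input size of the original intersection instance, which is clear since $A$ has entries drawn from those of the $B_i$'s.
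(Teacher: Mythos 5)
Your reduction is exactly the one the paper uses: form the block-diagonal direct sum $A = B_1 \oplus \cdots \oplus B_k \in \mathbb{Q}^{km\times kn}$, partition $[kn]$ into the $n$ blocks $P_j = \{j, n+j, \ldots, (k-1)n+j\}$, observe that the union of $m$ blocks is independent in the direct-sum matroid iff the corresponding $m$-set is independent in every $M_i$, and invoke Corollary~\ref{matroid1}. Your write-up is a bit more careful about the correspondence (and correctly has $(k-1)n+j$ where the paper has a typo), but the argument is the same.
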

\begin{proof}
Let $M = \bigoplus_{i=1}^k B_k$ be the direct sum of the input matrices. We first partition $[kn]$ into $n$ parts of size $k$ as follow: for $i \in [n]$, let $S_i \coloneqq \{i,i+n, i+2n, \ldots, i+kn\}$. If a union of $m$ of the blocks $S_1, \ldots, S_n$ are independent in the matroid represented by $M$, then $M_1, \ldots, M_k$ share a common base. Conversely, if these matroids share a common base, some union of the $S_i$'s are independent in the matroid represented by $M$. We conclude by applying Corollary \ref{matroid1} to the matrix $M \in \mathbb{Q}^{km \times kn}$ and the partition $S_1, \ldots, S_n$.
\end{proof}

Finally, we have our applications of Theorem \ref{hankeldiff}. These follow immediately by a reduction given in \cite[Theorem 1]{esa} to the following ``square-free monomial detection'' algorithm.
\begin{cor} 
Let $g \in \mathbb{Q}[x_1, \ldots, x_n]_d$ be a homogeneous degree-$d$ polynomial with nonnegative coefficients, computed by a skew arithmetic circuit $C$. Given as input $C$, we can decide in  deterministic $\varphi^{2d}  |C|  \poly(n)$ time whether $g$ contains a degree-$d$ square-free monomial.
\end{cor}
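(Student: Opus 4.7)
The plan is to mirror the proof of Corollary \ref{hankel_cycle}, replacing the trace polynomial $\tr(A_G^d)$ with the arbitrary nonnegatively-supported polynomial $g$. Concretely, I would take $V \in \mathbb{Q}^{d \times n}$ to be the Vandermonde matrix with $V_{i,j} = j^i$, and form the symbolic matrix $X = V \cdot \diag(x_1, \ldots, x_n) \cdot V^\mathrm{T}$. As already observed after Proposition \ref{test_c}, the $(i,j)$th entry of $X$ equals $\sum_{k=1}^n k^{i+j} x_k$, so $X$ is Hankel with entries in $\mathcal{S}_1^n$, and Theorem \ref{hankeldiff} applies.

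By the Cauchy-Binet formula,
\[
\det X = \sum_{\alpha \in I(n,d)} V[1, \ldots, d \mid \alpha]^2 \prod_{i \in \alpha} x_i.
\]
Since any $d$ columns of $V$ are linearly independent, each coefficient $V[1,\ldots,d \mid \alpha]^2$ is strictly positive. Hence the monomials appearing in $\det X$ are exactly the degree-$d$ square-free monomials, each with strictly positive coefficient. Because $g$ also has nonnegative coefficients, every term in the monomial-by-monomial expansion of $\langle \det X, g \rangle$ is nonnegative, and this inner product is nonzero if and only if some degree-$d$ square-free monomial appears in $g$.

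To finish, I would apply Theorem \ref{hankeldiff} to $X$ and the input circuit $C$, computing $\langle \det X, g \rangle$ in $\varphi^{2d} \poly(d) |C|$ arithmetic operations, and then simply test the result against zero. I do not expect any real obstacle: the only thing worth checking is the bit complexity of the arithmetic, but the entries of $V$ have bit size $O(d \log n)$ and the coefficients of $g$ are specified by the input, so all intermediate scalars remain of size $\poly(n, \text{input size})$ and the word-RAM overhead is absorbed into the claimed $\poly(n)$ factor. This is purely a repackaging of the Cauchy-Binet construction already used for cycle detection, which is why the paper calls the reduction immediate.
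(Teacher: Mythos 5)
Your proposal is essentially identical to the paper's proof: same Vandermonde matrix $V$, same $X = V \cdot \diag(x_1,\ldots,x_n) \cdot V^{\mathrm{T}}$, same Cauchy--Binet positivity argument, same observation that $X$ is Hankel, and the same invocation of Theorem~\ref{hankeldiff}. The remark about bit complexity is a small and correct addition, but the argument is otherwise the paper's own.
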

\begin{proof}
Let $V \in \mathbb{Q}^{d \times n}$ be the Vandermonde matrix with $V_{i,j}=j^i$, and  $X = V \cdot \diag(x_1, \ldots, x_n)\cdot V^\mathrm{T}$. By Cauchy-Binet,
\[\det X =  \sum_{S \subseteq \binom{[n]}{d}} \det(B_S)^2 \prod_{i \in S}x_i.\]
Since any $d$ columns in $B$ are linearly independent, $\det(B_S)^2 > 0$ for all $S$. It follows that since $g$ has nonnegative coefficients, $\langle \det X, g \rangle \neq 0$ if and only if $g$ contains a square-free monomial. Note that the $(i,j)$th entry in $X$ equals $\sum_{k=1}^n k^{i+j} x_k$, and therefore $X$ is Hankel. The theorem follows by invoking Theorem \ref{hankeldiff}.
\end{proof}
Applying \cite[Theorem 1]{esa}, we have:
\apps

\section{The Bilinear complexity of apolar algebras}
In this section we study the complexity of multiplication in apolar algebras as a first step towards generalizing Theorems \ref{gendiff} and \ref{hankeldiff}. We will work over $\mathbb{C}$ rather than $\mathbb{R}$ for convenience.
\subsection{Algebraic preliminaries}
\subsubsection{Apolarity}
Let $\mathcal{R}^n \coloneqq \mathbb{C}[\partial_1, \ldots, \partial_n]$ be the ring of partial differential operators. Elements of this ring are just multivariate polynomials in the variables $\partial_1, \ldots, \partial_n$. For an $n$-tuple $\alpha \in \mathbb{N}^n$, we let $\partial^\alpha$ be the monomial $\partial_1^{\alpha_1} \cdots \partial_n^{\alpha_n}$, and let $|\alpha| = \sum_{i=1}^n \alpha_i$.  For $h \in \mathcal{R}$ and $f \in \mathcal{S}$, we denote by $h \circ f$ the result of applying the differential operator $h$ to $f$. For example,
\[
(3\cdot\partial_1 \partial_2 + \partial_1^2) \act x_1^2x_2 = 
3\cdot\partial_1 \partial_2 \act x_1^2x_2 + \partial_1^2 \act x_1^2x_2 = 
6x_1 + 2x_2.
\]
It is clear that when $h$ and $f$ are homogeneous of the same degree, $h \circ f$ is a scalar. In this case $ f(\partial_1, \ldots, \partial_n) \circ g =\langle f, g \rangle$, so computing $h \circ f$ is equivalent to computing the apolar inner product.

\begin{defn}
For $f \in \mathcal{S}_d^n$, we define $\ann(f)$ as the ideal of elements in $\mathcal{R}^n$ annihilating $f$ under differentiation. We define the \emph{apolar algebra} $\sa_f$ as the quotient $\mathcal{R}^n/\ann(f)$.
\end{defn}
In other words, $\sa_f$ is the ring of representatives of equivalence classes of differential operators subject to the equivalence relation $\sim$, where $h \sim h'$ if and only if $h \circ f = h' \circ f$. It follows that there is a vector space isomorphism $\mathcal{J}$ between $\sa_f$ and $\diff(f)$, sending $h \in \sa_f$ to $h \circ f$. In particular, $(\sa_f)_i \cong \diff(f)_{d-i}$, where we denote by $(\sa_f)_i$ the vector space of degree-$i$ elements in $\sa_f$. 

\begin{remark}\label{rmkq}
Multiplication in $\sa_f$ corresponds to differentiating by $f$: for $h_1,h_2 \in \sa_f$, $\mathcal{J}(h_1 \cdot h_2) = h_1 \circ (h_2 \circ f)$. It follows that Lemmas \ref{linear_diff_gen} and \ref{cat_ldiff} are algorithms for multiplication by $\partial_l$ in $\sa_{\det X}$, with respect to the spanning sets of $\sa_{\det X}$ given by the inverse images of the minors (or maximal minors) of $X$.
\end{remark}

\begin{example}\label{mon_ex}
Let $f = x_1 x_2 \cdots x_n$. Note that for $1 \le i \le n$, $\partial_i^2 \circ f = 0$, and so $\partial_i^2 \in \ann(f)$. Moreover, it is not hard to see that $\partial_1^2, \ldots, \partial_n^2$ generate $\ann(f)$. So the apolar algebra of $f$ equals $\sa_f = \mathbb{C}[\partial_1, \ldots, \partial_n]/(\partial_1^2, \ldots, \partial_n^2)$. This ring has as a basis the set of square-free monomials $\{\prod_{i \in S} \partial_i\}_{S \subseteq [n]}$, and the product of two basis elements is given by the rule
\[
\partial_S \cdot \partial_T = \begin{cases} \partial_{S \cup T} &\mbox{if } S \cap T = \emptyset ,\\
0 & \mbox{else.}\end{cases}
\]
\end{example}

\subsubsection{Bilinear complexity}
We give a brief primer on bilinear complexity. We refer to Chapter 14 of \cite{burgisser2013algebraic} for an in-depth treathment of this topic.

Let $U,V,W$ be finite dimensional complex vector spaces, and let $U \otimes V \otimes W$ be the vector space of three-tensors. An element of $U \otimes V \otimes W$ of the form $u \otimes v \otimes w$ is called \emph{simple}. The \emph{rank} of a tensor $T \in U \otimes V \otimes W$, denoted $\mathbf{R}(T)$, is the smallest $r$ such that $T$ can be expressed as a sum of $r$ simple tensors.

A  $\mathbb{C}$-algebra $A = (V,\phi)$ is a complex vector space $V$ with a multiplication operation defined by a bilinear map $\phi : V \times V \to V$. We say $A$ is \emph{associative} if $\phi(v_1,\phi(v_2,v_3)) = \phi(\phi(v_1,v_2),v_3)$ for all $v_1,v_2,v_3 \in V$, and \emph{unital} if there is an element $e \in V$ such that $\phi(e,v) = \phi(v,e) = v$ for all $v \in V$. We will only be interested in unital associative algebras.

Let $\{e_1, \ldots, e_n\}$ be a basis for $V$ and $\{e_1^*, \ldots, e_n^*\}$ be its dual basis. We can naturally identify $A$ with its \emph{structure tensor}
\[\sum_{i,j \in [n]} e_i \otimes e_j \otimes (e_i \cdot e_j) = \sum_{i,j,k \in [n]} e_k^*(\phi(e_i,e_j)) e_i \otimes e_j \otimes e_k \in V \otimes V \otimes V.\]
As an abuse of notation, we denote by $\mathbf{R}(A)$ the rank of the structure tensor of $A$. The algorithmic importance of this quantity is that its at most twice the minimum number of non-scalar multiplications needed to compute the product of two elements in $A$ \cite[Equation 14.8]{burgisser2013algebraic}. Ranks of algebras are a classic topic in algebraic complexity (see \cite[Chapter 17]{burgisser2013algebraic}, with the following being the most notorious example.

\begin{example}
Let $M_n = (\mathbb{C}^{n \times n}, \phi)$ be the algebra of $n \times n$ complex matrices, where $\phi$ is given by matrix multiplication. The vector space $\mathbb{C}^{n \times n}$ has as a basis the set of matrices $\{e_{ij}\}_{i,j \in [n]}$, where $e_{ij}$ is the matrix whose $(i,j)$th entry equals one and all other entries equal zero. The multiplication of two basis elements is given by the rule $e_{ij} \cdot e_{kl} = e_{il}$ if $j=k$, and $e_{ij} \cdot e_{kl}=0$ otherwise. Hence the structure tensor of $A$ is $\langle n, n, n \rangle \coloneqq\sum_{i,j,k \in [n]} e_{ij}\otimes e_{jk} \otimes e_{ik}$,  the \emph{matrix multiplication tensor}. The exponent of matrix multiplication is defined as $\omega \coloneqq \inf_{c}\{ \mathbf{R}(M_n) \le O(n^c) \}$.
\end{example}

\begin{example}[Fast subset convolution]\label{fsc}
Let $f =  x_1 \cdots x_n$. We claim that the problem of multiplying elements in $\sa_f$ is exactly that of computing the \emph{subset convolution}. Here the subset convolutions is defined for functions $\sigma, \tau : 2^{[n]} \to \mathbb{C}$ as the function $\sigma * \tau : 2^{[n]} \to \mathbb{C}$ such that
\[(\sigma * \tau) (S) = \sum_{U \subseteq S} \sigma(U)\tau(S - U).\]
The problem of computing $(\sigma * \tau)(S)$ for all $S$, given as input the $2^n$ values of $\tau$ and $\sigma$, has a handful of applications in exact algorithms \cite{bjorklund2007fourier,fominbook}. We now elaborate on the connection between subset convolution and $\sa_f$.

 Using the basis of square-free monomials as in Example \ref{mon_ex}, define the elements $a = \sum_{S \subseteq [n]} \sigma(S)\partial_S, b = \sum_{S \subseteq [n]} \tau(S) \partial_S$ of $\sa_f$. Then by the equation for multiplication in $\sa_f$ given in \ref{mon_ex}
\[a \cdot b = \sum_{S \subseteq [n]}(\sigma * \tau) (S) \partial_S,\]
so we can compute the subset convolution by computing $a \cdot b$ and reading off the coefficients of the result. It follows that the minimum number of non-scalar multiplications necessary to compute the subset convolution is at most $2 \mathbf{R}(\sa_f)$ (recalling \cite[Equation 14.8]{burgisser2013algebraic}). The structure tensor of $\sa_f$ is

\[\sum_{S,T \subseteq [n]} \partial_S \otimes \partial _T \otimes (\partial_S \cdot \partial_T) = \sum_{S,T \subseteq [n], S \cap T = \emptyset } \partial_S \otimes \partial _T \otimes \partial_{S \cup T}.\]
This expression shows that the rank of this tensor is at most the number of pairs of disjoint subsets of $[n]$, which equals $3^n$ (each element in $[n]$ can be assigned to one of two subsets, or to none). In \cite{bjorklund2007fourier} an algorithm for computing subset convolution is given that uses just $O(\binom{n+2}{2} 2^n)$ multiplications, thus showing that $\mathbf{R}(\sa_f) \le O(\binom{n+2}{2} 2^n)$. In fact, one can say slightly more: the rank of this tensor has been studied in algebraic complexity, and it is known that $3 \cdot 2^n - o(2^n)\le \mathbf{R}(\sa_f) \le (2n+1)2^n$ \cite[Proposition 7,9]{zuiddam2017note}.
\end{example}

Our next Theorem relates $\mathbf{R}(\sa_f)$ to Waring rank.

\begin{theorem}\label{waringrelax}
Let $f \in \mathcal{S}_d^n$ and let $\sa_f$ be its apolar algebra. Then 
\[\mathbf{R}(\sa_f) \le (3d+1)  \mathbf{R}_S(f).\]
\end{theorem}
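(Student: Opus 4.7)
My approach is to convert a minimal Waring decomposition $f = \sum_{k=1}^r c_k \ell_k^d$ (with $r = \mathbf{R}_S(f)$ and $\ell_k = \sum_j a_{k,j} x_j$) into a bilinear algorithm for multiplication in $\sa_f$. Recall that under the vector-space isomorphism $\mathcal{J}$ of Remark \ref{rmkq}, the product $h_1 \cdot h_2$ in $\sa_f$ corresponds to $(h_1 h_2) \circ f \in \diff(f)$, and every class in $\sa_f$ has a representative of degree at most $d$.

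The starting point is the routine identity $h \circ \ell^d = \frac{d!}{(d-m)!} h(a) \ell^{d-m}$ for $h$ homogeneous of degree $m \le d$ and $\ell$ a linear form with coefficient vector $a$ (where $h(a)$ denotes substituting $\partial_j \mapsto a_j$). Combining this with the Waring decomposition and the homogeneous decompositions $h_1 = \sum_i h_{1,i}$, $h_2 = \sum_j h_{2,j}$ yields
\[
(h_1 h_2) \circ f = \sum_{k=1}^r c_k \sum_{m=0}^d \frac{d!}{(d-m)!}\, \ell_k^{d-m} \sum_{i+j = m} h_{1,i}(a_k)\, h_{2,j}(a_k).
\]
Introducing a formal variable $t$ and setting $u_k(t) := h_1(t a_k) = \sum_i h_{1,i}(a_k) t^i$ and $v_k(t) := h_2(t a_k)$, both of degree at most $d$, the inner sum over $i+j = m$ is precisely $[t^m]\, u_k(t) v_k(t)$. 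Hence for each $k$, the $k$-th summand above factors as: (a) the linear maps $h_1 \mapsto u_k$, $h_2 \mapsto v_k$; (b) the truncated polynomial product $u_k v_k \bmod t^{d+1}$; and (c) the linear embedding of the $d+1$ resulting coefficients into $\sa_f$ via $[t^m] \mapsto c_k \frac{d!}{(d-m)!} \mathcal{J}^{-1}(\ell_k^{d-m})$.

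Steps (a) and (c) are bilinearly free, while step (b) has bilinear complexity at most $2d+1$ over $\mathbb{C}$ by the classical evaluation-interpolation bound on polynomial multiplication, and so at most $3d+1$. Summing the per-$k$ complexities over $k = 1, \ldots, r$ therefore gives $\mathbf{R}(\sa_f) \le (3d+1)\mathbf{R}_S(f)$. The main point to check is that this decomposition actually computes multiplication on $\sa_f$, not merely on lifts in $\mathcal{R}^n$: this follows because replacing $h_i$ by $h_i + l$ for any $l \in \ann(f)$ leaves $(h_1 h_2) \circ f$ unchanged (since $l \circ f = 0$), so the formula descends to $\sa_f$. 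The conceptual heart of the argument is thus not the linear bookkeeping but the recognition that a Waring term $\ell_k^d$ reduces multiplication in the corresponding component of $\sa_f$ to ordinary polynomial multiplication in one variable.
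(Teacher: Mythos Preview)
Your argument is correct and is morally the same as the paper's: both start from a Waring decomposition $f=\sum_k c_k\ell_k^d$, use the identity $\partial^\gamma\circ\ell_k^d=\tfrac{d!}{(d-|\gamma|)!}a_k^\gamma\,\ell_k^{d-|\gamma|}$, and then interpolate. The packaging differs. The paper introduces a parameter $\varepsilon$ into all three tensor factors, obtaining for each $\varepsilon$ a rank-$r$ tensor whose $\varepsilon^d$ coefficient is the structure tensor; since the degree in $\varepsilon$ runs up to $3d$, extracting that coefficient costs $3d+1$ specializations. You instead observe that for each $k$ the contribution is exactly a truncated product of two univariate polynomials of degree $\le d$, which has bilinear rank $2d+1$ over $\mathbb{C}$. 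Your route therefore actually proves the stronger bound $\mathbf{R}(\sa_f)\le(2d+1)\,\mathbf{R}_S(f)$, and the weakening to $3d+1$ is unnecessary.

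One small imprecision: in your step~(c) you write the output map as $[t^m]\mapsto c_k\tfrac{d!}{(d-m)!}\mathcal{J}^{-1}(\ell_k^{d-m})$, but the individual powers $\ell_k^{d-m}$ need not lie in $\diff(f)$, so $\mathcal{J}^{-1}$ is not literally applicable termwise. This is harmless: the third factors of your rank-one terms live a~priori in the ambient $\mathcal{S}^n$, and since the assembled bilinear map lands in $\diff(f)\cong\sa_f$, composing with any linear retraction $\mathcal{S}^n\to\diff(f)$ (equivalently, noting that tensor rank does not increase under such a projection) yields a genuine rank decomposition of the structure tensor in $\sa_f^{\otimes 3}$. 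With that one-line fix the argument is complete.
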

\begin{proof}
Suppose that $f=\sum_{i=1}^r b_i \ell_i^d$, where $\ell_i = (a_{i,1} x_1 + \cdots + a_{i,n}x_n)$. Let $B$ be a monomial basis for $\sa_f$. Let
\[T \coloneqq \sum_{\partial^\alpha,\partial^\beta \in B} \partial^\alpha \otimes \partial^{\beta} \otimes (\partial^{\alpha +\beta}\circ f) \in \sa_f \otimes \sa_f \otimes \diff(f).\]
First note that by the Apolarity lemma \cite[Lemma 1.15(i)]{ik},
\[\partial^{\alpha+\beta} \circ f = \frac{d!}{(d-|\alpha| - |\beta|)!}\sum_{i=1}^r c_i a_{i,1}^{\alpha_1+\beta_1}\cdots a_{i,n}^{\alpha_n + \beta_n} \ell_i^{d-|\alpha|-|\beta|}\]
and hence for an indeterminate $\varepsilon$,
\[T\varepsilon^d + \sum_{\substack{0 \le i \le 3d \\ i \neq d}} T_i \varepsilon^i = \sum_{i=1}^r \left (\sum_{\partial^\alpha \in B}\partial^{\alpha} a_{i,1}^{\alpha_1} \cdots a_{i,n}^{\alpha_n}\varepsilon^{|\alpha|} \right )\otimes \left (\sum_{\partial^\beta \in B}\partial^\beta a_{i,1}^{\beta_1} \cdots a_{i,n}^{\beta_n}\varepsilon^{|\beta|} \right ) \otimes \left ( \sum_{j=0}^d \frac{c_i d!}{(d-j)!} \ell_k^{d-j} \varepsilon^{d-j} \right )\]
since if $|\alpha| + |\beta| + (d-j) = d$, then $|\alpha| + |\beta| = j$. Here the $T_i$'s are ``junk'' tensors we'd like to get rid of. We do this with an interpolation trick. Let $\{\varepsilon_i\}_{0 \le i \le 3d}$ be elements of $\mathbb{C}$ that are distinct and nonzero, and let $\{c_i\}$ be the solution to the Vandermonde system
\[
 \sum_{i=0}^{3d} \varepsilon_i^j c_i = \begin{cases} 1, &\mbox{ } j=d ,\\
0, & \mbox{ } j\in \{0, \ldots, d-1,d+1,\ldots, 3d\}.\end{cases}
\]
Then $\sum_{i=0}^{3d} c_i ( T \varepsilon_i^d + \sum_{j \neq d} T_j \varepsilon_i^j) = T$, and hence $\mathbf{R}(T) \le (3d+1)\mathbf{R}_S(f)$. 

Finally, we claim that $T$ is isomorphic to the structure tensor of $\sa_f$. This follows by applying the vector space isomorphism between $\diff(f)$ and $\sa_f$ sending $h \circ f$ to $h$, which sends $\partial^{\alpha + \beta} \circ f$ to $\partial^\alpha \partial^\beta$.
\end{proof}

We also have the following simple lower bound:
\begin{prop}
$\mathbf{R}(\sa_f) \ge \dim \sa_f = \dim \diff(f)$.
\end{prop}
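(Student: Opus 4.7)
The equality $\dim \sa_f = \dim \diff(f)$ is immediate from the vector space isomorphism $\mathcal{J} : \sa_f \to \diff(f)$ sending $h \mapsto h \circ f$ that was already established, so only the inequality $\mathbf{R}(\sa_f) \ge \dim \sa_f$ requires argument. The plan is to exploit the fact that $\sa_f$ is a \emph{unital} algebra: its multiplication map is surjective, which forces any decomposition of the structure tensor to involve at least $\dim \sa_f$ linearly independent terms in its ``output'' slot. This is a standard flattening-type lower bound for structure tensors of unital algebras.

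Concretely, fix a basis $e_1, \ldots, e_N$ of $\sa_f$ with $N = \dim \sa_f$, and write the structure tensor as
\[
T \;=\; \sum_{i,j \in [N]} e_i^* \otimes e_j^* \otimes (e_i \cdot e_j) \;\in\; \sa_f^* \otimes \sa_f^* \otimes \sa_f.
\]
Suppose $T = \sum_{k=1}^{r} u_k \otimes v_k \otimes w_k$ realizes the rank $r = \mathbf{R}(\sa_f)$. Contracting the first two factors against $a, b \in \sa_f$ gives the bilinear map $B(a,b) \coloneqq \sum_k u_k(a)\, v_k(b)\, w_k$, which by construction agrees with the multiplication map of $\sa_f$. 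The image of $B$ is contained in $\mathrm{span}(w_1, \ldots, w_r)$, so it has dimension at most $r$. On the other hand, setting $a = \mathbf{1}$ and letting $b$ range over $\sa_f$ yields $B(\mathbf{1}, b) = b$, so the image of $B$ is all of $\sa_f$. Combining these gives $r \ge N$.

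There is no genuine obstacle here; the only care needed is in identifying the contracted bilinear slice of $T$ with the algebra multiplication, which is immediate from the definition of the structure tensor. The argument is the same one that shows $\mathbf{R}(A) \ge \dim A$ for every finite-dimensional unital associative algebra $A$, and can be found in the treatment of Chapter 14 of \cite{burgisser2013algebraic}.
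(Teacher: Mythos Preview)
Your argument is correct and is exactly the approach the paper takes: the paper's proof simply notes that $\sa_f$ is unital and cites \cite[Section 2.1]{zuiddam2017note} for the standard fact $\mathbf{R}(A)\ge\dim A$, whereas you have spelled out that argument in full. The only cosmetic difference is that the paper writes the structure tensor in $V\otimes V\otimes V$ rather than $V^*\otimes V^*\otimes V$, but this does not affect anything.
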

\begin{proof}
As $\sa_f$ is unital, the Proposition follows (see e.g.~\cite[Section 2.1]{zuiddam2017note}).
\end{proof}

The algorithmic relevance of $\mathbf{R}(\sa_f)$ to the computing apolar inner product is given explicitly by the following proposition.
\begin{prop}\label{multbd}
Fix $f \in \mathcal{S}_d^n$, and let $g \in \mathcal{S}_d^n$ be given as an arithmetic circuit $C$. Then we can compute $\langle f, g \rangle$ using $O(\mathbf{R}(\sa_f) |C|)$ non-scalar multiplications.
\end{prop}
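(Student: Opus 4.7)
The plan is to emulate the inductive approach from the proof of Theorem \ref{gendiff}, but to work directly inside the apolar algebra $\sa_f$ rather than in the space of minors of a specific symbolic matrix. As $f$-only preprocessing that contributes no non-scalar multiplications, I would first fix a vector-space basis $\mathcal{B}$ for $\sa_f$, a rank decomposition of its structure tensor certifying $\mathbf{R}(\sa_f)$, and the (scalar-valued) linear functional $\phi$ on the top-graded piece $(\sa_f)_d$ defined by $\phi(h) = h \circ f$ via the isomorphism $\mathcal{J}$; note that $(\sa_f)_d \cong \diff(f)_0 \cong \mathbb{C}$, so $\phi$ is just reading off one coordinate in $\mathcal{B}$.

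Next I would traverse $C$ in topological order, maintaining at each gate $v$ the coordinate vector in $\mathcal{B}$ of the class $\tilde C_v \in \sa_f$ of $C_v(\partial_1, \ldots, \partial_n)$. An input gate labeled $x_i$ stores $\partial_i$; an input constant stores the corresponding multiple of $1$. A sum gate $v = v' + v''$ produces the coordinatewise sum $\tilde C_v = \tilde C_{v'} + \tilde C_{v''}$ using no non-scalar multiplications. A product gate $v = v' \cdot v''$ invokes the precomputed bilinear algorithm derived from the rank decomposition of the structure tensor, yielding $\tilde C_v = \tilde C_{v'} \cdot \tilde C_{v''}$ at a cost of $\mathbf{R}(\sa_f)$ non-scalar multiplications. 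Summed over the $|C|$ gates of the circuit, this gives the claimed $O(\mathbf{R}(\sa_f) |C|)$ bound.

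At the output gate one holds $\tilde C_{\mathrm{out}} = g(\partial_1, \ldots, \partial_n) \in \sa_f$; since $g$ is homogeneous of degree $d$, this class lies in $(\sa_f)_d$, and $\phi(\tilde C_{\mathrm{out}}) = g \circ f = \langle f, g \rangle$, computed using only scalar operations. The one point that might seem delicate is that intermediate gates of $C$ need not compute homogeneous polynomials, so one could worry about tracking many degree components separately; but no special care is required, because substitution $x_i \mapsto \partial_i$ followed by reduction modulo $\ann(f)$ is a ring homomorphism $\mathbb{C}[x_1, \ldots, x_n] \to \sa_f$, and since $f$ has degree $d$ every operator of degree exceeding $d$ automatically vanishes in $\sa_f$. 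Consequently the main algorithmic obstacle is not in the argument above but upstream of it, in actually exhibiting small-rank decompositions for concrete $\sa_f$, which motivates the rest of the section.
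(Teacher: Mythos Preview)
Your proposal is correct and takes essentially the same approach as the paper: evaluate the circuit gate by gate inside $\sa_f$, using the rank decomposition of the structure tensor at product gates, and read off the answer from the top-degree component at the end. Your write-up is in fact a bit more careful than the paper's, explicitly noting that $x_i \mapsto \partial_i$ followed by reduction modulo $\ann(f)$ is a ring homomorphism (so nonhomogeneous intermediate values cause no trouble), whereas the paper simply declares that it evaluates $C$ over $\sa_f$ and extracts the coefficient of a fixed generator $q$ of $(\sa_f)_d$.
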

\begin{proof}
Let $(\sa_f)_1$ have the basis $\partial_1, \ldots, \partial_k$ for some $k \le n$, and let $(\sa_f)_d$ have the basis $q$. Let $h = g(\partial_1, \ldots, \partial_n)$. Then the result of evaluating $h$ over $\sa_f$ equals $h \bmod Ann(f) = \frac{\langle f, g \rangle q}{\langle f, q \rangle}$. So, our algorithm will evaluate $C$ over $\sa_f$, obtaining $c \cdot q$ for some $c \in \mathbb{C}$. We then return $c\langle f, q \rangle$. Note that $\langle f, q \rangle$ does not depend on the input $g$.

To evaluate $h$, we first replace the input gates $x_i$ in $C$ by zero if $i > k$, and $\partial_i$ otherwise. We then evaluate $C$ inductively over $\sa_f$. At each gate we store an element of $\sa_f$, which can be encoded by a vector of length $\dim \sa_f$. At addition gates we simply sum the two inputs, which is done with $\dim \sa_f \le \mathbf{R}(\sa_f)$ additions, where the inequality follows by Proposition \ref{multbd}. Multiplication gates can be computed with at most $2\mathbf{R}(\sa_f)$ non-scalar operations by \cite[Equation 14.8]{burgisser2013algebraic}.
\end{proof}
\subsection{The bilinear complexity of $\sa_{\det_n}$}

\begin{theorem}\label{det_rank}
$\mathbf{R}(\sa_{\det_n}) \le O\left (n2^{\omega n} \right )$.
\end{theorem}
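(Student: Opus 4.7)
My plan is to exhibit $\sa_{\det_n}$ as a limit of algebras isomorphic to $n$-fold tensor powers of a small Clifford algebra $\mathrm{Cl}$, which over $\mathbb{C}$ is isomorphic to $M_2(\mathbb{C})$, so that $\mathrm{Cl}^{\otimes n} \cong M_{2^n}(\mathbb{C})$ has structure tensor of rank $\le 2^{\omega n}$ by definition of $\omega$. A Vandermonde interpolation argument --- identical in spirit to the one used in the proof of Theorem~\ref{waringrelax} --- will then upgrade the resulting border rank estimate into the claimed rank bound $\mathbf{R}(\sa_{\det_n}) \le O(n \cdot 2^{\omega n})$.

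To set up the Clifford algebra, let $\mathrm{Cl} := \mathbb{C}\langle a, b\rangle/(a^2,\, b^2,\, ab+ba-1)$, which has basis $\{1, a, b, ab\}$; the identification $a \mapsto E_{21}$, $b \mapsto E_{12}$ gives $\mathrm{Cl} \cong M_2(\mathbb{C})$. A basis of $\mathrm{Cl}^{\otimes n}$ is indexed by pairs $(I, J) \in 2^{[n]}\times 2^{[n]}$, obtained by placing one of $1, a, b, ab$ in the $i$-th tensor factor according to the membership of $i$ in $I$ and $J$. The $\mathbb{Z}$-grading with $w(a) = +1, w(b) = -1, w(1)=w(ab) = 0$ is compatible with $\mathrm{Cl}$ since the defining relation $ab + ba = 1$ is homogeneous of weight $0$, and the weight-$0$ (``balanced'') subalgebra of $\mathrm{Cl}^{\otimes n}$, spanned by pairs with $|I|=|J|$, has dimension $\sum_k \binom{n}{k}^2 = \binom{2n}{n}$, matching $\dim \sa_{\det_n}$ by Proposition~\ref{det_gen_bd}.

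The balanced subalgebra above is noncommutative, so it cannot itself be $\sa_{\det_n}$; I would introduce a one-parameter deformation $\mathrm{Cl}_\varepsilon$ --- for example replacing $ab + ba = 1$ by $ab + q(\varepsilon)\,ba = 1$ for a suitable $q$ with $q(1)=1$ --- having the properties that (i) for each $\varepsilon \ne 0$, $\mathrm{Cl}_\varepsilon \cong \mathrm{Cl}$ by a linear rescaling of $a,b$, so $\mathrm{Cl}_\varepsilon^{\otimes n} \cong M_{2^n}(\mathbb{C})$ retains rank $\le 2^{\omega n}$; and (ii) as $\varepsilon \to 0$ the balanced subalgebra of $\mathrm{Cl}_\varepsilon^{\otimes n}$ becomes commutative, and is identified with $\sa_{\det_n}$ via the bijection sending the balanced basis element at $(I, J)$ to the minor $X[I | J]$ of the generic $n \times n$ matrix (up to signs). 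Verifying (ii) is the main technical obstacle: it reduces to matching the (deformed) Clifford product of two balanced basis elements, in the $\varepsilon \to 0$ limit, against the apolar product of the corresponding minors in $\sa_{\det_n}$, which is governed by the classical Laplace-expansion / Cauchy-Binet identities. Granting this degeneration, the structure constants of $\sa_{\det_n}$ appear as the $\varepsilon^0$-coefficients in a polynomial-in-$\varepsilon$ family of structure tensors each of rank $\le 2^{\omega n}$ and of total $\varepsilon$-degree $O(n)$ (each of the $n$ tensor factors contributing $O(1)$), so Vandermonde interpolation at $O(n)$ distinct nonzero values of $\varepsilon$ --- exactly as in the proof of Theorem~\ref{waringrelax} --- extracts a rank decomposition of the structure tensor of $\sa_{\det_n}$ of total size $O(n \cdot 2^{\omega n})$, proving the theorem.
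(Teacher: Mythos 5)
Your proposal correctly identifies the high-level strategy the paper uses: realize (the border of) the structure tensor of $\sa_{\det_n}$ inside a tensor power of a Clifford algebra isomorphic to a matrix algebra, so that the definition of $\omega$ gives a rank bound, and then upgrade border rank to rank by Vandermonde interpolation at $O(n)$ points. The dimension count via the $\mathbb{Z}$-grading with $w(a)=1$, $w(b)=-1$ is also fine, and the overall bookkeeping ($\binom{2n}{n}$ balanced basis elements, $O(n)$ interpolation nodes) matches the paper. However, your step (ii) --- the claim that some deformation of the relation $ab + ba = 1$ to $ab + q(\varepsilon)ba = 1$ makes the balanced subalgebra of $\mathrm{Cl}_\varepsilon^{\otimes n}$ degenerate to $\sa_{\det_n}$ --- is exactly the content of the theorem, and you leave it unverified. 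Worse, the specific deformation you propose does not produce the right structure constants, as a computation at $n=2$ shows. In $\mathrm{Cl}_q^{\otimes 2}$ one has $ba = (1-ab)/q$, hence
\[
(a\otimes b)(b\otimes a) = ab \otimes ba = \tfrac{1}{q}\bigl(ab\otimes 1 - ab\otimes ab\bigr),
\]
whereas in $\sa_{\det_2}$ the corresponding product is
\[
\partial_{1,2}\,\partial_{2,1} \equiv -\,\partial_{1,1}\partial_{2,2} \pmod{\ann(\det_2)},
\]
i.e.\ a pure multiple of the top basis element. No scalar choice of $q(\varepsilon)$ makes the spurious $ab\otimes 1$ term vanish in the $\varepsilon\to 0$ limit while keeping the $ab\otimes ab$ term with the correct nonzero coefficient; both terms carry the same $1/q$ prefactor.

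The paper's mechanism is different in an essential way: it does not deform the algebra relations at all. It works with the Clifford algebra $CL_n$ (basis $X_U$, $U\subseteq[n]$, with $X_U X_{U'} = \sgn(U,U')X_{U\Delta U'}$), takes the \emph{tensor square} $T_n\otimes T_n$ of its structure tensor (one factor for row indices, one for column indices), and then applies linear rescaling maps $M(X_U\otimes X_V) = (U|V)\,\varepsilon^{|U|+|V|}$ on the two input slots and $M'(X_U\otimes X_V) = (U|V)\,\varepsilon^{-|U|-|V|}$ on the output slot. The resulting $\varepsilon$-exponent on each structure constant is $2|U\cap U'| + 2|V\cap V'| \ge 0$, vanishing precisely when $U\cap U' = V\cap V' = \emptyset$, in which case the Clifford sign $\sgn(U,U')\sgn(V,V')$ literally equals the determinant's apolar-algebra sign. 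So the degeneration is a rescaling of basis vectors, not a deformation of multiplication, and the matching of signs happens automatically without needing to solve for a special $q$. If you want to salvage your $\mathrm{Cl}^{\otimes n}$ picture, the right move is to borrow this mechanism: keep $q\equiv 1$ and instead rescale the balanced basis element at $(I,J)$ by an appropriate power of $\varepsilon$ (and project out the unbalanced part), rather than deforming the defining relation.
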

\begin{proof}
We first give a basis for $\sa_{\det_n}$ and describe how multiplication behaves with respect to this basis. This tells us what the structure tensor of the apolar algebra is. We then show how to obtain this tensor from $4n+1$ copies of $\langle 2^n, 2^n, 2^n \rangle$. We do this somewhat indirectly, using the fact that complex Clifford algebras are isomorphic to matrix algebras \cite[Chapter 13]{porteous_1981}. We assume that $n$ is even for ease of exposition.

We claim that the set of monomials of the form $(I | J) \coloneqq \partial_{I_1,J_1} \cdots \partial_{I_k,J_k}$, where $I,J \in I(n,k)$ and $0 \le k \le n$, are a basis for $\sa_{\det_n}$. This follows from the fact that there are $\binom{2n}{n}$ such monomials, $\dim \diff(\det_n) = \binom{2n}{n}$, and the polynomials of the form $(I | J ) \circ \det_n$ are linearly independent. The latter claim can be seen by noting that if $(I | J) \neq (I' | J')$, $(I | J ) \circ \det_n$ and $(I' | J' ) \circ \det_n$ have disjoint sets of monomials appearing in their expansion.

Next we claim that the product of two basis elements $(I | J)$ and $(I' | J')$ is given by the rule
\[
(I | J) \cdot (I' | J') = \begin{cases} 0 \mbox{ if } I \cap I' \neq \emptyset \text{ or } J \cap J' \neq \emptyset,\\
\sgn(I,I')\sgn(J,J')(I \cup I' | J \cup J')  \mbox{ else}\end{cases}
\]
 where $\sgn(I,I')$ denotes the sign of the permutation that brings the sequence $I_1, \ldots, I_k, I'_1, \ldots, I_{k'}$ into increasing order, and $I \cup I'$ denotes the resulting sorted sequence. Indeed, if $I \cap I' \neq \emptyset$, then $(I | J)(I'  | J')$ is divisible by the product of two variables that have the same first (row) index. But then $(I | J)(I'  | J') \circ \det_n = 0$, since all monomials in the determinant have different row indices. The second case follows from the fact that for $I,J \in I(n,k)$ and $\tau \in \mathfrak{S}_k$, $(I|J) \circ \det_n = \sgn \tau^{-1} \cdot (\tau(I) | J) \circ \det_n$, which follows from the Leibniz formula for the determinant. Therefore the structure tensor of $\sa_{\det_n}$ equals
\[T = \sum_{\substack{I,J,I',J' \subseteq [n]\\ |I| = |J|, |I'| = |J'| \\ I \cap I' = J \cap J' = \emptyset}} \sgn(I,I')\sgn(J,J') (I|J) \otimes (I' | J') \otimes (I \cup I' | J \cup J').\]

Let $CL_n = (V_n, \cdot)$ be the Clifford algebra of a nondegenerate quadratic form on $\mathbb{C}^n$ (see \cite[Chapter 13]{porteous_1981} for background). Concretely, $V_n$ has the basis $X_U$ for $U \subseteq [n]$, and the product of two basis elements is given by the rule
\[X_U \cdot X_{U'} = \sgn(U,U') X_{U \Delta U'}\]
where $\Delta$ denotes the symmetric difference of sets. Here $\sgn(U,U')$ is the sign of the permutation that brings $U,U'$ into nondecreasing order, leaving the relative order of any repeated elements unchanged. So the structure tensor of $CL_n$ is
\[T_n \coloneqq \sum_{U,U' \subseteq [n]} \sgn(U,U') X_U \otimes X_{U'} \otimes X_{U \Delta U'}.\]
Since $CL_n$ is isomorphic to the algebra of $2^{n/2} \times 2^{n/2}$ matrices \cite[Section 3]{nlab:clifford_algebra}, $\mathbf{R}(T_n) \le O(2^{\omega n/2})$. Thus by submultiplicativity of tensor rank under the tensor product,
\[T_n \otimes T_n = \sum_{U,V,U',V' \subseteq [n]} \sgn(U,U') \sgn(V,V')(X_U \otimes X_V) \otimes (X_{U'}\otimes X_{V'}) \otimes (X_{U \Delta U'} \otimes X_{V \Delta V'})\]
has rank at most $O(2^{\omega n})$. Now define the linear transformations $M,M' : V_n \otimes V_n \to \sa_{\det_n}(\varepsilon)$ given by $M(X_U \otimes X_V) = (U | V) \varepsilon^{|U| + |V|}$, and $M'(X_U \otimes X_V) = (U | V)\varepsilon^{-|U|-|V|}$, where $\varepsilon$ is some indeterminate. Applying $M$ to the first two factors of the above tensor and $M'$ to the third factor,
\[(M,M,M') \cdot (T_n \otimes T_n) = T + \sum_{i=1}^{4n} \varepsilon^{i} H_i,\]
for some ``junk'' tensors $H_i$, since $|U| + |U'| = |U \Delta U'|$ if and only if $U$ and $U'$ are disjoint. Applying the interpolation trick as in Theorem \ref{waringrelax}, it follows that $\mathbf{R}(\sa_{\det_n}) \le O ((4n+1)2^{\omega n})$.
\end{proof}
\begin{remark}
In fact, the above proof shows that the border rank of the structural tensor of $\sa_{\det_n}$ is at most $O(2^{\omega n})$.
\end{remark}

\section{Acknowledgments}
We would like to thank several anonymous reviewers for their comments on an earlier draft of this paper.

\bibliographystyle{alpha}
\bibliography{refs}

\end{document}